\documentclass[12pt]{article}
\usepackage{amsmath, amsthm, amssymb, bbm, setspace,bigints}
\usepackage[margin=1 in]{geometry}
\usepackage{caption, enumitem}
\usepackage{subcaption}
\usepackage[toc,page]{appendix}
\usepackage{graphicx, amsfonts, tikz, multirow}
\usepackage{xurl}
\usepackage{natbib}
\usepackage{mathtools}
\usetikzlibrary{bayesnet}
\usepackage{pdfpages, algpseudocode}
\usepackage{epstopdf}
\usepackage{booktabs}
\usepackage{floatrow, lscape}
\usepackage[ruled]{algorithm2e}
\usepackage[colorlinks,citecolor=blue,urlcolor=blue]{hyperref}
\usepackage[utf8]{inputenc}
\usepackage{authblk}
\usepackage{comment}
\usepackage{xr}
\newfloatcommand{capbtabbox}{table}[][\FBwidth]

\newcommand{\norm}[1]{\Vert#1\Vert}

\def\E{{\mathbb{E} }}

\newcommand{\Wtwo}{W_2}
\newcommand{\pistar}{\pi^*_{\varphi}}
\DeclareMathOperator*{\argmax}{arg\,max}

\allowdisplaybreaks
\newtheorem{theorem}{Theorem}[section]
\newtheorem{lemma}[theorem]{Lemma}

\newtheorem{remark}{Remark}

\newtheorem{assumption}[theorem]{Assumption}
\usepackage{mathtools}
\usepackage{cleveref}

\date{}
\title{Two-Loop Stochastic Mirror Langevin Algorithms \\ for Constrained Sampling}
\author{Ruiting Tong\thanks{tong125@purdue.edu}}
\author{Antik Chakraborty\thanks{antik015@purdue.edu}}
\author{Raghu Pasupathy \thanks{pasupath@purdue.edu}}
\affil{Department of Statistics, Purdue University}
\begin{document}
\maketitle
\begin{abstract}
We study the problem of sampling from a target distribution
$\pi(x)\propto e^{-f(x)}$ supported on a convex set
$\mathcal X\subseteq\mathbb R^d$, when the potential $f$ is accessible
only through a stochastic first-order oracle. Mirror Langevin algorithms
provide a natural approach to constrained sampling by transporting the
problem to an unconstrained dual space and discretizing the resulting
Mirror Langevin diffusion. Existing implementations, however, typically
use a fixed discretization step size and consequently retain a
nonvanishing discretization bias at any fixed step size. Moreover, their
direct extension to settings with noisy gradient information entails the challenge of controlling both discretization and stochastic-oracle error. We study a stochastic first-order version of the Mirror Langevin Algorithm
(sFO-MLA) and, as our main contribution, develop a warm-started
two-loop implementation in which an outer loop progressively decreases
the step size while an inner loop runs sFO-MLA (with a fixed step size) for an appropriately
chosen epoch length. The construction provides a principled schedule
linking step sizes and epoch lengths, so that successive epochs warm-start
from increasingly accurate distributions rather than repeatedly paying
the cost of mixing from a cold start. We establish finite-time
Wasserstein guarantees for sFO-MLA that explicitly separate mixing,
Euler--Maruyama discretization, and stochastic-gradient errors. These
bounds yield a fixed-horizon rate of
$\widetilde O(T^{-1/2})$ and show that the two-loop scheme removes the
associated logarithmic penalty, attaining the canonical
$O(T^{-1/2})$ rate under a geometric step-size schedule and
corresponding epoch lengths. We illustrate the methodology in two statistically distinct settings:
large-scale Bayesian mixture-weight inference on the simplex, where
stochasticity arises from data subsampling, and a doubly intractable
undirected graphical model, where stochastic gradients arise from
Monte Carlo estimation.
\end{abstract}
\section{Introduction}

We consider the problem of sampling from a target probability
distribution $$\pi(x)\propto e^{-f(x)}\mathbf 1_{\mathcal X}(x), \mathcal X\subseteq\mathbb R^d,$$
where $\mathcal X$ is convex and the potential $f$ is accessible only
through a \emph{stochastic first-order oracle}. Specifically, we assume
oracle access to $G(\cdot,\zeta)$ satisfying
$\E_\zeta[G(x,\zeta)] = \nabla f(x),
x\in\mathcal X.$ This setting arises naturally in statistical and machine-learning
problems in which exact gradient evaluation is unavailable or
computationally prohibitive.

A canonical statistical example is provided by latent-variable and
hierarchical models~\citep{1977dem,2007gel}, including missing-data
problems. Suppose, for example, that
$\pi(x)=\int \pi(x,z)\,\mu(dz)\propto e^{-f(x)},x\in\mathcal X,$ where $z$ denotes a latent or unobserved variable. Under standard
regularity conditions, Fisher's identity~\citep{1982lou} expresses the
marginal score as a conditional expectation,
$\nabla_x\log\pi(x)
=
\E\!\left[
\nabla_x\log\pi(x,Z)\mid x
\right],$ so that Monte Carlo sampling of the latent variable naturally produces
a stochastic first-order oracle. A related situation arises in Bayesian
models with parameter-dependent intractable normalizing constants
~\citep{wainwright2008graphical}, where the likelihood gradient itself
must typically be estimated by Monte Carlo. Stochastic first-order
information also arises in simulation-based inference
~\citep{cranmer2020frontier}, simulation-based optimization, and
physics-informed learning~\citep{lin2022multi}. Finally, in large-data
settings, even when the full gradient is available in principle,
evaluating it at every iteration may be prohibitively expensive, while
minibatch gradients provide cheaper unbiased
approximations.

The constraint set $\mathcal X$ introduces an additional computational
challenge. Sampling from constrained distributions is a recurring
problem in modern statistics and machine learning, particularly in
Bayesian inference, where posterior samples are used to approximate
posterior expectations, credible regions, and predictive quantities.
Constraints may arise from scientific, structural, or domain
considerations. Examples include simplex constraints in latent-variable
models such as Latent Dirichlet Allocation~\citep{blei2003latent},
polyhedral constraints in graphical models~\citep{yang2013poisson},
and shape constraints in nonparametric regression
~\citep{guntuboyina2018nonparametric}. Such problems have motivated a
substantial literature on constrained sampling, including random-walk
methods such as Hit-and-Run~\citep{lovasz1999hit} and the Ball Walk
~\citep{kannan1997random}, as well as constrained variants of
Hamiltonian Monte Carlo~\citep{betancourt2011nested,pakman2014exact}.

More recently, constrained sampling over convex sets has been studied
through the lens of Langevin dynamics
~\citep{hsieh2018mirrored,zhang2020wasserstein}, building on the
extensive theory of Langevin methods for unconstrained sampling
~\citep{dalalyan_2017,cheng2018convergence,vempala2019rapid}. A natural
approach borrows from constrained optimization
~\citep{nemirovski1983problem,2003becteb}: a suitable mirror function
$\varphi$, defined on the interior of $\mathcal X$, induces the map
$\nabla\varphi:
\operatorname{int}(\mathcal X)\longrightarrow\mathbb R^d,$ which transports the constrained primal space to an unconstrained dual
space. Mirror Langevin diffusion is then defined in the dual
coordinates, and an Euler--Maruyama discretization
~\citep{2007asmgly} yields the Mirror Langevin Algorithm (MLA);
see, e.g., \citet{li2022mirror,srinivasan2024fast} for recent analyses
and entry points.

Stochastic-gradient variants of mirrored Langevin methods have also
been considered previously. In particular, \citet{hsieh2018mirrored}
study stochastic mirrored Langevin dynamics using stochastic-gradient
information. Our objective here is different. We begin with the
Mirror Langevin diffusion and discretization studied in the more recent
MLA literature and ask how stochastic first-order noise interacts with
its finite-step discretization error, and, more importantly, how one
can systematically drive both errors to zero without repeatedly paying
the cost of mixing from a cold start. %Thus, the main contribution of
%this paper is not merely the replacement of an exact gradient by a
%stochastic one, but a warm-started multistage implementation and an
%analysis quantifying when and why it improves on a fixed-step
%implementation.

The work most directly related to this multistage construction is
\citet{rolland2020double}, who introduced the \emph{double-loop}
Unadjusted Langevin Algorithm (DL-ULA). Their method holds the step size
fixed within an inner loop and decreases it geometrically between outer
iterations, thereby producing a multistage decreasing-step
implementation of ULA. They analyze smooth log-concave targets using
exact gradients, without assuming strong log-concavity, and extend the
same construction to constrained sampling through a Moreau--Yosida
approximation (DL-MYULA). Their algorithm propagates a randomly selected
inner iterate and uses a Euclidean clipping step to control tails in
the analysis. For general $L$-smooth log-concave targets, their
$W_2$ analysis yields an iteration complexity
$\widetilde O(Ld^9\epsilon^{-6})$ for achieving error $\epsilon$, while
their constrained DL-MYULA analysis yields
$\widetilde O(d^{3.5}\epsilon^{-10})$.
See also \citet{2011welteh,2016tehthivol} for related discussions of the
tension between discretization bias and mixing under decreasing
step-size schemes.

We develop the same broad multistage idea in a different regime:
constrained sampling through mirror geometry when only stochastic
first-order information is available. The constraint is handled
intrinsically through the mirror map rather than through penalization,
and, crucially, the terminal state of each epoch initializes the next.
Our analysis shows that a fixed-step stochastic first-order MLA has an
$O(\sqrt{\eta})$ error floor, while reaching that floor from a cold
start requires
$O\!\left(\eta^{-1}\log(1/\eta)\right)$ iterations. This observation
suggests a natural warm-started construction: decrease the step size
geometrically across epochs, while running each epoch for only order
$\eta_k^{-1}$ iterations. Under the strongly monotone dual regime used
in existing MLA theory, this scheme attains
$W_2=O(T^{-1/2})$ as a function of total computational effort $T$,
thereby removing the logarithmic cold-start penalty. We further show
that the underlying warm-start mechanism is not tied to exponential
contraction and extends to a generic polynomial-mixing setting. As a matter of terminology,~\cite{rolland2020double} use the term
``double-loop'' for their multistage construction; we retain the term
\emph{two-loop} here for the corresponding outer--inner structure of
sFO-MLA.

\subsection{Summary of insight and contribution}

We list our insight and contribution in roughly decreasing order of importance.

\begin{enumerate}

\item (Warm-started two-loop sFO-MLA.)
We develop a two-loop implementation of stochastic first-order MLA in
which the outer loop selects a decreasing sequence of
Euler--Maruyama step sizes $\{\eta_k\}$ and epoch $k$ runs the
fixed-step chain for $\ell_k$ iterations. The terminal state of epoch
$k$ initializes epoch $k+1$, so that progressively smaller step sizes
are paired with progressively more accurate initial laws.

Under the strongly monotone dual regime inherited from existing MLA
theory, we show that the geometric schedule
$\eta_k=\eta_0\beta^k,\ell_k\asymp \eta_k^{-1},$
attains $W_2(\mu_k,\pi_\varphi^*)=O(T_k^{-1/2}),$
provided the epoch contraction is sufficiently strong relative to the
reduction in the fixed-step error floor; here $T_k$ denotes the
cumulative number of inner iterations. The analysis makes explicit how
warm starts eliminate the logarithmic cold-start penalty of an
optimally tuned fixed-step implementation.

We further isolate the warm-start mechanism from the particular
exponential contraction supplied by strong monotonicity. Under an
abstract polynomial fixed-step mixing condition and a step-size
dependent invariant-measure bias bound, the same geometric
step-size/epoch construction yields an analogous rate whenever the
epoch-wise forgetting dominates.

\item (Finite-time theory for fixed-step sFO-MLA.)
As the analytical foundation for the two-loop construction, we extend
the finite-time MLA analysis of \citet{li2022mirror} to stochastic
first-order oracles. Under the same structural conditions on the dual
drift and mirror diffusion, we derive a Wasserstein bound that
separates three effects: transient mixing, Euler--Maruyama
discretization error, and stochastic-gradient error. In particular,
the stochastic-gradient perturbation preserves an
$O(\sqrt{\eta})$ fixed-step error floor.

This decomposition also quantifies the interaction between step size
and computational horizon. From a cold start, a fixed step size $\eta$
requires $O\!\left(\eta^{-1}\log(1/\eta)\right)$
iterations for the transient term to reach the
$O(\sqrt{\eta})$ floor. Conversely, optimizing the fixed-step bound
over a horizon $T$ gives
$\eta^*(T)\asymp \frac{\log T}{T},
W_2=\widetilde O(T^{-1/2}).$
These calculations expose the logarithmic penalty removed by
the warm-started two-loop construction.

\item (Implementation under distinct stochastic-oracle mechanisms.)
We study two-loop sFO-MLA in two statistically different settings. The
first is Bayesian mixture-weight inference on the simplex in a
large-data regime, where stochasticity arises from minibatching. The
second is a doubly intractable undirected graphical model, where the
likelihood-gradient contribution is estimated by Monte Carlo. The two
examples therefore exercise the algorithm under different sources of
oracle noise and different constraint geometries.

\item (A downstream application to maximum marginal likelihood
estimation.)
Finally, in Section~\ref{sec:mmle} we show how the fixed-step theory can be used when constrained
sampling is an intermediate computational primitive rather than the
final inferential goal. In a latent-variable maximum marginal
likelihood problem, sFO-MLA approximately samples the conditional
latent-variable distribution appearing in Fisher's identity. We derive
conditions coupling the outer stochastic-approximation step size and
the inner sFO-MLA step size that guarantee almost-sure convergence of
the resulting iterates. %The full result is provided in the
%supplementary material, Section~\ref{sec:mmle}.

\end{enumerate}

\section{Preliminaries}
\label{sec:preliminaries}

Let $\varphi:\mathbb R^d\to(-\infty,+\infty]$
be a Legendre mirror map with
$\operatorname{int}(\operatorname{dom}\varphi)=\operatorname{int}(\mathcal X)$,
such that $\nabla\varphi:\operatorname{int}(\mathcal X)\to\mathbb R^d$
is a bijection. Writing $\varphi^*(y)
:=
\sup_{x\in\mathbb R^d}
\{\langle x,y\rangle-\varphi(x)\},$
we then have $(\nabla\varphi)^{-1}(y)=\nabla\varphi^*(y), \, y\in\mathbb R^d.$ For $y=\nabla\varphi(x), x(y):=\nabla\varphi^*(y),$
the constrained target $\pi$ induces the dual target $\pi_\varphi^*
:=
(\nabla\varphi)_\#\pi$
on the unconstrained space $\mathbb R^d$.

Define $g(y):=\nabla f(x(y)),
H_\varphi(y):=\nabla^2\varphi^*(y),$
and $A_\varphi(y)
:=
H_\varphi(y)^{-1/2}
=
\left[\nabla^2\varphi(x(y))\right]^{1/2}.$
The Mirror Langevin diffusion (MLD) is
$dY_\tau
=
-g(Y_\tau)\,d\tau
+
\sqrt{2}\,A_\varphi(Y_\tau)\,dW_\tau,
\, X_\tau=\nabla\varphi^*(Y_\tau),$
whose invariant dual law is $\pi_\varphi^*$ under appropriate regularity
conditions~\citep{zhang2020wasserstein}. Euler--Maruyama discretization
with step size $\eta>0$ gives the Mirror Langevin Algorithm (MLA),
\begin{align}
Y_{t+1}
&=
Y_t-\eta g(Y_t)
+
\sqrt{2\eta}\,A_\varphi(Y_t)N_{t+1},
\nonumber\\
X_{t+1}
&=
\nabla\varphi^*(Y_{t+1}),
\tag{MLA}
\label{eq:mla-short}
\end{align}
where $\{N_t\}_{t\ge1}$ is an i.i.d.\ sequence of
$N(0,I_d)$ vectors.

We write
$\pi^*_{t,\eta,\varphi}$ and $\pi_{t,\eta,\varphi}$ for the laws of
$Y_t$ and $X_t$, respectively, under~\eqref{eq:mla-short}. When the
fixed-step chain admits an invariant law, we denote its dual and primal
invariant distributions by $\pi^*_{\eta,\varphi}$ and
$\pi_{\eta,\varphi}$. Existing analyses of MLA study, among other
quantities, the discrepancy between these fixed-step distributions and
the target as $\eta\downarrow0$
~\citep{zhang2020wasserstein,li2022mirror}.

Because discretization generally introduces bias, one alternative is
to use the MLA transition as a proposal within a
Metropolis--Hastings correction. If $q(\cdot\mid x)$ denotes the
proposal density from $x$, a proposed state $z$ is
accepted with probability
$\alpha_{\mathrm{MH}}(x,z)
=
\min\left\{
1,
\frac{\pi(z)q(x\mid z)}
{\pi(x)q(z\mid x)}
\right\}.$
A Metropolis-adjusted version of MLA is studied in
\citet{srinivasan2024fast}.

In the present paper, the exact drift $g$ is unavailable. Instead, we
assume access to a stochastic first-order oracle
$G(x,\zeta)$ satisfying $\E_\zeta[G(x,\zeta)]
=
\nabla f(x).$
The corresponding dual stochastic-gradient estimator is $\widehat g(y,\zeta)
:=
G(\nabla\varphi^*(y),\zeta),
\quad
\E_\zeta[\widehat g(y,\zeta)]
=
g(y).$ Replacing $g$ in MLA by $\widehat g$ yields the stochastic first-order
Mirror Langevin Algorithm (sFO-MLA):
\begin{align}
\widehat Y_{t+1}
&=
\widehat Y_t
-\eta\widehat g(\widehat Y_t,\zeta_{t+1})
+
\sqrt{2\eta}\,
A_\varphi(\widehat Y_t)N_{t+1},
\nonumber\\
\widehat X_{t+1}
&=
\nabla\varphi^*(\widehat Y_{t+1}).
\tag{sFO-MLA}
\label{eq:FO}
\end{align}
We denote the corresponding laws by
$\widehat\pi^*_{t,\eta,\varphi}$ and
$\widehat\pi_{t,\eta,\varphi}$.

\begin{remark}[Tractability of the proposal density]
\label{rem:proposal_tractability}
Conditional on a realization of the stochastic gradient, the sFO-MLA
proposal has the same Gaussian structure as its exact-gradient
counterpart. Indeed, let $x$ be the current primal state and let $h$
denote a realized value of
$\widehat g(\nabla\varphi(x),\zeta)$. Then
$\widehat Y_{t+1}
\mid
\{X_t=x,\widehat g(\nabla\varphi(X_t),\zeta)=h\}
\sim
N\!\left(
\nabla\varphi(x)-\eta h,\,
2\eta\nabla^2\varphi(x)
\right).$ Hence, for a proposed primal state $z$, $q(z\mid x,h) =
\mathrm N_d\!\left(
\nabla\varphi(z);
\nabla\varphi(x)-\eta h,\,
2\eta\nabla^2\varphi(x)
\right)
\det\!\left(\nabla^2\varphi(z)\right).$ If $Q_x$ denotes the conditional law of the stochastic gradient given
$X_t=x$, then the marginal proposal density is $q(z\mid x)
=
\int q(z\mid x,h)\,Q_x(dh),$
which is generally unavailable in closed form. By contrast, in the
exact-gradient setting of \citet{srinivasan2024fast}, $Q_x$ is a point
mass at $g(\nabla\varphi(x))$, and the proposal density is explicit.
\end{remark}

%For completeness, let
%$\pi^*_{\tau,\varphi}$ and $\pi_{\tau,\varphi}$ denote the laws of
%$Y_\tau$ and $X_\tau$ in MLD; under stationarity their
%limits are $\pi_\varphi^*$ and $\pi$, respectively. Our error metric throughout the paper is $2$-Wasserstein distance. 

%For probability
%measures $\rho,\nu\in\mathcal P_2(\mathbb R^d)$, $W_2^2(\rho,\nu)
%=
%\inf_{\Gamma\in\Pi(\rho,\nu)}
%\int_{\mathbb R^d\times\mathbb R^d}
%\|u-v\|^2\,\Gamma(du,dv),$
%where $\Pi(\rho,\nu)$ denotes the set of couplings of $\rho$ and $\nu$.
%The primal and dual Wasserstein distances are comparable whenever the
%mirror map is bi-Lipschitz. In particular, if $\varphi$ is
%$m_\varphi$-strongly convex and $L_\varphi$-smooth, and $\rho^*=(\nabla\varphi)_\#\rho, \nu^*=(\nabla\varphi)_\#\nu,$
%then $m_\varphi W_2(\rho,\nu)
%\le
%W_2(\rho^*,\nu^*)
%\le
%L_\varphi W_2(\rho,\nu),$
%or equivalently, $\frac{1}{L_\varphi}W_2(\rho^*,\nu^*)
%\le
%W_2(\rho,\nu)
%\le
%\frac{1}{m_\varphi}W_2(\rho^*,\nu^*).$

\section{Fixed-step~\ref{eq:FO}: Finite Time Guarantees}\label{sec:main_results}
We begin by stating conditions on the dual drift and diffusion
coefficients that yield a contractive fixed-step sFO-MLA recursion.

\begin{assumption}[Strong monotonicity of the dual drift]
\label{ass:A1}
There exists $m>0$ such that $\langle g(y)-g(y'),y-y'\rangle
\geq
m\|y-y'\|^2,
\, y,y'\in\mathbb R^d.
$
\end{assumption}

\begin{assumption}[Lipschitz continuity of the dual drift]
\label{ass:A2}
There exists $M>0$ such that $\|g(y)-g(y')\|
\leq
M\|y-y'\|,
\, y,y'\in\mathbb R^d.$
\end{assumption}

\begin{assumption}[Lipschitz mirror diffusion]
\label{ass:A3}
There exists $\alpha\geq0$ such that $\|A_\varphi(y)-A_\varphi(y')\|_{\mathrm F}^2
\leq
\alpha\|y-y'\|^2, \,y,y'\in\mathbb R^d.$
\end{assumption}

\begin{assumption}[Positive effective gap]
\label{ass:A4}
The effective contraction parameter
$\gamma:=m-\alpha$
is strictly positive. We write $\kappa:=\frac{M}{\gamma}.$
\end{assumption}

Assumptions~\ref{ass:A1}--\ref{ass:A4} are stated directly in the
dual space because this is the space in which both Mirror Langevin
dynamics and sFO-MLA evolve. Up to differences in notation and the
normalization of the diffusion coefficient, these are the same
structural assumptions used in the finite-time analysis of Mirror
Langevin algorithms in \citet{li2022mirror} and related work.
Assumptions~\ref{ass:A1} and~\ref{ass:A2} play roles analogous to
strong convexity and smoothness in Euclidean Langevin analysis:
the former provides contraction of the dual drift, while the latter
controls its variation under discretization. We state
Assumption~\ref{ass:A1} as strong monotonicity rather than strong
convexity because the dual drift
$g(y)=\nabla f(\nabla\varphi^*(y))$ need not itself be the gradient
of a scalar potential in the Euclidean dual coordinates.
Assumption~\ref{ass:A3} controls the additional separation created,
under synchronous coupling, by the state-dependent mirror diffusion.
Consequently, $m$ and $\alpha$ have opposing effects, warranting 
Assumption~\ref{ass:A4}.

We emphasize that Assumptions~\ref{ass:A1} and~\ref{ass:A3}, in
particular, are global Euclidean conditions and can be stringent for
mirror maps adapted to constrained domains. For example, the curvature
of barrier-type mirror maps typically becomes increasingly pronounced
near the boundary, making global Euclidean Lipschitz control of the
state-dependent diffusion restrictive. A more intrinsic analysis could
instead measure drift and diffusion variations in the local norms induced
by the mirror geometry, $\|u\|_x^2
:=
\langle u,\nabla^2\varphi(x)u\rangle, \|v\|_{*,x}^2
:=
\langle v,[\nabla^2\varphi(x)]^{-1}v\rangle,$
and replace the global conditions above by suitable relative or local
counterparts. Such a generalization is important, but is not the primary
objective of the present paper. Our main aim is to isolate and develop
the warm-started two-loop principle for stochastic Mirror Langevin
sampling, and the established assumptions above provide a transparent
setting in which to do so and facilitate direct comparison with the
existing MLA literature. Indeed, as shown in Section~\ref{sec:twoloop},
the two-loop mechanism itself does not fundamentally rely on the
exponential contraction supplied by Assumption~\ref{ass:A1}.
We are now ready to state the main fixed step result for sFO-MLA.
%, providing an upper bound for the $W_2$ distance between $\hat \pi^*_{t,\eta,\varphi}$ and the dual target $\pi^*_{\varphi}$. 

\begin{theorem}[Fixed-Step sFO-MLA]
\label{thm:MS-rec}
Suppose Assumptions \ref{ass:A1}--\ref{ass:A4} hold. Let $\{Y_t\}_{t\ge0}$ and $\{\widehat Y_t\}_{t\ge0}$ denote, respectively, the exact-gradient MLA chain~\eqref{eq:mla-short} and the sFO-MLA chain~\eqref{eq:FO}, defined on a common filtered probability space and driven by the same Gaussian innovations. Suppose that $\pi^*_{0,\eta,\varphi} = \widehat\pi^*_{0,\eta,\varphi},$
and that, for every $t\ge0$, $\E\!\left[ \widehat g(\widehat Y_t,\zeta_{t+1}) \,\middle|\, \mathcal F_t \right] = g(\widehat Y_t),$ and 
$\E\!\left[ \left\| \widehat g(\widehat Y_t,\zeta_{t+1}) - g(\widehat Y_t) \right\|^2 \,\middle|\, \mathcal F_t \right] \le \sigma^2d.$
Then there exist constants $c_{\mathrm L},c_{\mathrm{disc}}>0$ such that, if
$0<\eta\le \eta_{\max}^{\mathrm{FO}} \coloneqq \min\left\{ \frac{\gamma}{M^2}, \; c_{\mathrm L} \frac{\gamma^2} {M^2(1+4\alpha)^2} \right\},$
then, for $t\ge0$,
\begin{align}
\Wtwo\!\left(\widehat\pi^*_{t,\eta,\varphi}, \pi_\varphi^* \right)\le{}& e^{-\gamma\eta t} \Wtwo\!\left( \widehat\pi^*_{0,\eta,\varphi}, \pi_\varphi^* \right) + \sqrt{d\eta} \left[ c_{\mathrm{disc}}(1+4\alpha)\kappa + \frac{\sigma}{\sqrt{\gamma}} \right].
\label{eq:main-W2-generic-fo}
\end{align} 
\end{theorem}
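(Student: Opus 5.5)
The plan is a one-step contraction argument under synchronous coupling. We then unroll the resulting recursion and split the error floor into a discretization part and a stochastic-gradient part.

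Rather than bounding $\widehat Y_t$ against $Y_t$ and then invoking the exact MLA result of \citet{li2022mirror}, I would couple $\widehat Y_t$ directly with a stationary reference chain. This avoids an extra additive mixing term. Concretely, let $\bar Y_\tau$ solve the MLD started from $\bar Y_0\sim\pi_\varphi^*$, with $\bar Y_0$ optimally coupled to $\widehat Y_0$. Drive $\bar Y$ by a Brownian motion $W$ whose increments $W_{(t+1)\eta}-W_{t\eta}=\sqrt\eta N_{t+1}$ are the same Gaussians used in \eqref{eq:FO}. Set $\bar Y_t:=\bar Y_{t\eta}$, which has law $\pi_\varphi^*$ for every $t$, and write $\Delta_t:=\widehat Y_t-\bar Y_t$. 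Subtracting the integrated MLD from the sFO-MLA update gives
\begin{align*}
\Delta_{t+1}={}&\Delta_t-\eta\bigl(g(\widehat Y_t)-g(\bar Y_t)\bigr)+\sqrt{2\eta}\bigl(A_\varphi(\widehat Y_t)-A_\varphi(\bar Y_t)\bigr)N_{t+1}\\
&-\eta\,\xi_{t+1}+R_{t+1},
\end{align*}
where $\xi_{t+1}:=\widehat g(\widehat Y_t,\zeta_{t+1})-g(\widehat Y_t)$ is the oracle noise. The term $R_{t+1}$ is the local discretization error,
\[
R_{t+1}:=-\int_{t\eta}^{(t+1)\eta}\bigl(g(\bar Y_s)-g(\bar Y_{t\eta})\bigr)\,ds+\sqrt2\int_{t\eta}^{(t+1)\eta}\bigl(A_\varphi(\bar Y_s)-A_\varphi(\bar Y_{t\eta})\bigr)\,dW_s.
\]

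\textbf{Step 1 (contraction of the main part).} Let $D_t$ denote the first line of the decomposition. Using Assumptions~\ref{ass:A1}--\ref{ass:A3}, independence of $N_{t+1}$ from $\mathcal F_t$, and $\E[N N^\top]=I$, we get
\[
\E\|D_t\|^2\le(1-2m\eta+M^2\eta^2+2\alpha\eta)\,\E\|\Delta_t\|^2\le(1-\gamma\eta)\,\E\|\Delta_t\|^2,
\]
where the last inequality uses $\eta\le\gamma/M^2$, so that $2\gamma\eta-M^2\eta^2\ge\gamma\eta$.

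\textbf{Step 2 (stochastic-gradient term).} By the conditional unbiasedness assumption, $\xi_{t+1}$ is conditionally mean zero given $\mathcal F_t$ (augmented with $\bar Y_t$ and $N_{t+1}$, which I would assume independent of $\zeta_{t+1}$). Hence the cross term $\E\langle D_t,\xi_{t+1}\rangle$ vanishes, and the noise contributes $\eta^2\sigma^2d$ with no cross term. This is the structural reason the noise enters only at order $\sqrt{\eta}$ rather than $O(1)$.

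\textbf{Step 3 (discretization term).} I would bound $\E\|R_{t+1}\|^2$ and the cross term $\E\langle D_t,R_{t+1}\rangle$ using stationarity of $\bar Y$. By Itô isometry, Lipschitzness, and short-time moments $\E\|\bar Y_s-\bar Y_{t\eta}\|^2\lesssim \eta^2\E\|g\|^2+\eta\,\E\|A_\varphi\|_F^2$, we have $\E\|R_{t+1}\|^2\lesssim(1+4\alpha)^2M^2d\eta^3/\gamma$-type quantities. Here the stationary moments are controlled by $\E_{\pi^*_\varphi}\|g\|^2\lesssim Md$ and the mirror trace. The cross term is handled by splitting $R_{t+1}$ into its martingale part and its conditional-mean part. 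The martingale (stochastic-integral) piece has zero conditional mean against $D_t$ except through its correlation with $N_{t+1}$. The drift piece's conditional mean is $O(\eta^2)$ in a suitable sense, via Itô's formula applied to $g$, or more crudely via Young's inequality $2|\langle D_t,b\rangle|\le \tfrac{\gamma\eta}{2}\|D_t\|^2+\tfrac{2}{\gamma\eta}\|b\|^2$. Here I expect the main obstacle: getting order $\eta^3$ for the squared local error after Young's inequality, rather than $\eta^2$. I would first try the crude Young approach, which already yields an $O(\eta^2)$ per-step floor contribution $\|b\|^2/(\gamma\eta)\sim\eta^3 M^2 d(1+4\alpha)^2/\gamma$. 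I would choose $c_{\mathrm L}$ small enough that the resulting cross-term factors can be absorbed, leaving a net contraction factor $1-\gamma\eta$ up to constants.

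\textbf{Step 4 (unrolling).} These steps give $e_{t+1}^2\le(1-\gamma\eta)e_t^2+C\eta^3 d(1+4\alpha)^2M^2/\gamma+\eta^2\sigma^2 d$, where $e_t^2=\E\|\Delta_t\|^2$. Using $\sqrt{a+b}\le\sqrt a+\sqrt b$ and $(1-\gamma\eta)^{t/2}\le e^{-\gamma\eta t/2}$, we obtain
\[
e_t\le e^{-\gamma\eta t/2}e_0+\sqrt{d\eta}\,\bigl[c\,(1+4\alpha)\kappa+\sigma/\sqrt\gamma\bigr].
\]
To get the exponent $e^{-\gamma\eta t}$ exactly as in \eqref{eq:main-W2-generic-fo}, I would run the recursion on $e_t$ itself rather than on $e_t^2$, i.e.\ via $L^2$ norms and the Minkowski inequality. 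From Step 1, $\|D_t\|_{L^2}\le(1-\gamma\eta/2)^{1/2}\cdots$; I would recalibrate constants (equivalently, redefine $\gamma$ through $c_{\mathrm L}$) as needed. Finally, $\Wtwo(\widehat\pi^*_{t,\eta,\varphi},\pi_\varphi^*)\le e_t$ and $e_0=\Wtwo(\widehat\pi^*_{0,\eta,\varphi},\pi_\varphi^*)$, which gives the claimed bound.
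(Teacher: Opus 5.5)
Your route differs from the paper's. The paper never compares the chain with the diffusion directly. It couples $\widehat Y_t$ synchronously with the exact-gradient MLA chain $Y_t$ started from the same law. It then runs essentially your Steps~1--2 with $Y_t$ in place of $\bar Y_t$ and no $R_{t+1}$, obtaining $\Wtwo(\widehat\pi^*_{t,\eta,\varphi},\pi^*_{t,\eta,\varphi})\le\sigma\sqrt{d\eta/\gamma}$. Finally it adds the exact-gradient bound of \citet{li2022mirror} by the triangle inequality.

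Your stated reason for avoiding this is unfounded. Because the initial laws coincide, the transient in the perturbation bound vanishes identically, so no extra mixing term appears. The modular split has two advantages. The oracle term sits in a recursion with no discretization cross terms, which is why the coefficient is exactly $\sigma/\sqrt\gamma$. And the discretization constant and the exponent are inherited rather than re-proved. Your route is self-contained, but it must redo the hardest part of the exact-gradient analysis.

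That hardest part is where your proposal breaks. With a state-dependent diffusion, the stochastic-integral part of $R_{t+1}$ has second moment $2\int_{t\eta}^{(t+1)\eta}\E\|A_\varphi(\bar Y_s)-A_\varphi(\bar Y_{t\eta})\|_{\mathrm F}^2\,ds$, by It\^o isometry and Assumption~\ref{ass:A3}. This is of order $\alpha\eta^2\,\E_{\pi^*_\varphi}\|A_\varphi\|_{\mathrm F}^2$, not $\eta^3$, since Euler--Maruyama with multiplicative noise has strong order $1/2$. So your claim $\E\|R_{t+1}\|^2\lesssim\eta^3$ is false. Step~4 is also internally inconsistent: an $\eta^3$ per-step term would unroll to an $O(\eta)$ floor, not the $\sqrt{d\eta}$ you write.

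More seriously, the Minkowski recursion on $e_t$ that you propose for sharpening the exponent fails outright. Since $\|R_{t+1}\|_{L^2}=O(\eta)$, dividing by the per-step contraction $O(\gamma\eta)$ leaves an $O(1)$ floor. Minkowski works for ULA only because there the diffusion is constant and the local error is the drift integral, of $L^2$ size $O(\eta^{3/2})$.

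The repair is to stay with squared errors and let $\E\|R_{t+1}\|^2=O(\eta^2)$ enter directly. You then need $\E\langle D_t,R_{t+1}\rangle=O(\eta^{3/2})\,e_t$.
\begin{itemize}
\item For the drift part this follows from Cauchy--Schwarz.
\item For the martingale part, only the $N_{t+1}$-component of $D_t$ correlates with it. It\^o isometry reduces the cross term to $2\int_{t\eta}^{(t+1)\eta}\E\langle A_\varphi(\widehat Y_t)-A_\varphi(\bar Y_t),A_\varphi(\bar Y_s)-A_\varphi(\bar Y_{t\eta})\rangle_{\mathrm F}\,ds=O(\alpha\eta^{3/2}e_t)$.
\end{itemize}
Young's inequality then gives an $O(\eta^2)$ per-step floor, so $e_t^2$ is of order $\eta$. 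This is precisely the mean-square analysis the paper imports rather than redoes.

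Two further points. First, the bound $\E_{\pi^*_\varphi}\|g\|^2\lesssim Md$ is not available. Here $g$ is not the score of $\pi^*_\varphi$, so the Euclidean integration-by-parts identity does not apply. Applying the generator to $\|y-y^\star\|^2$ at stationarity, with $g(y^\star)=0$, bounds these moments in terms of $\mathrm{tr}\,\nabla^2\varphi(\nabla\varphi^*(y^\star))$. The assumptions do not relate this quantity to $d$, so reaching the $\sqrt d$ form needs an extra normalization.

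Second, you cannot ``redefine $\gamma$ through $c_{\mathrm L}$'': $\gamma=m-\alpha$ is fixed by Assumption~\ref{ass:A4}. Shrinking the step size does not help either, since $1-2\gamma\eta+M^2\eta^2>e^{-2\gamma\eta}$ for every $\eta>0$ once $M^2>2\gamma^2$. After Young absorption your recursion therefore yields $e^{-c\gamma\eta t}$ for some $c<1$. The paper's own coupling step likewise produces only $e^{-\gamma\eta t/2}$, which is harmless there because that transient vanishes. The exponent $e^{-\gamma\eta t}$ in the statement comes entirely from the cited exact-gradient bound.
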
\vspace{-0.2in}

The bound in~\eqref{eq:main-W2-generic-fo} separates three effects.
The first term is the transient error and decays exponentially at rate
$\gamma\eta$. The second contribution,
$c_{\mathrm{disc}}(1+4\alpha)\kappa\sqrt{d\eta}$, is the
Euler--Maruyama discretization error, while
$\sigma\sqrt{d\eta/\gamma}$ is the additional error induced by the
stochastic first-order oracle. Thus, for a fixed step size, sFO-MLA
mixes to an $O(\sqrt{\eta})$ neighborhood of the dual target rather
than converging exactly to it. Assuming the mirror map $\varphi$ to be $m_\varphi$-strongly convex, we see that $\nabla \varphi^*$ is $1/m_{\varphi}$ smooth. Theorem~\ref{thm:MS-rec} then gives us also the forward implication in the primal space, that is, $W_2(\hat \pi_{t,\eta,\varphi},\pi) \leq \frac{1}{m_{\varphi}}W_2(\hat \pi^*_{t,\eta,\varphi},\pi^*_{\varphi})$, where $\hat \pi_{t,\eta,\varphi} = (\nabla \varphi^*)_{\#}\hat \pi^*_{t,\eta,\varphi}$ is the primal law. 

Theorem~\ref{thm:MS-rec} also quantifies the tradeoff between the
step size and the run length. For a fixed $\eta$, one may ask how long
does it take before the transient error falls below the
$O(\sqrt{\eta})$ error floor. Conversely, for a fixed computational
horizon $T$, one may optimize $\eta$ to balance these two terms. Suppose the common initial law
$\widehat\pi^*_{0,\eta,\varphi}
=\pi^*_{0,\eta,\varphi}=\mu_0^*$
does not depend on $\eta$, and define
$A_0 \coloneqq \Wtwo\!\left(
\mu_0^*,\pi_\varphi^* \right)$ and $B_0 \coloneqq \sqrt{d} \left[ c_{\mathrm{disc}}(1+4\alpha)\kappa + \frac{\sigma}{\sqrt{\gamma}} \right].$ Therefore, letting $\bar\eta := \min\{\eta_{\max}^{\mathrm{FO}},1\}$, Theorem~\ref{thm:MS-rec} implies that 
\begin{equation}\label{gaja}
\Wtwo\!\left(
\widehat\pi^*_{t,\eta,\varphi},
\pi_\varphi^*
\right)
\le
w(t,\eta)
:=
A_0e^{-\gamma\eta t}
+
B_0 \sqrt{\eta}.
\end{equation} for $0<\eta\le\bar\eta$. Considering fixed step-size implementation, the following lemma then characterizes a sufficient burn-in length $T^*(\eta)$ to ensure that the distance between $\widehat \pi^*_{t,\eta,\varphi}$ and the dual target $\pi^*_{\varphi}$ remains $O(\sqrt{\eta})$ thereafter; conversely, the lemma also characterizes the step size $\eta^*(T)$ minimizing the bound
in~\eqref{gaja} for a fixed sufficiently large horizon, in the process identifying the optimal fixed step size when implementing for a predetermined large horizon $T$. 
\begin{lemma}[EM step size and burn-in length]
\label{lem:fixedstepopt}
Suppose the assumptions of Theorem~\ref{thm:MS-rec} hold.

\begin{enumerate}

\item
Suppose $A_0>0$. For a fixed $\eta\in(0,\bar\eta]$, define
\[
T^*(\eta)
:=
\left\lceil 
\frac{1}{\gamma\eta}
\left[
\log\!\left(
\frac{A_0}{B_0\sqrt{\eta}}
\right)
\right]_+
\right\rceil.
\]
Then, for $t\ge T^*(\eta)$, $w(t,\eta)
\le
2B_0\sqrt{\eta}.$ (If $A_0 =0,$ the same conclusion holds with $T^*(\eta)=0$.)

\item
Suppose $A_0>0$. There exists $T_0<\infty$ such that, for every
integer $T\ge T_0$, the function $w(T,\cdot)$ attains its
minimum over $(0,\bar\eta]$ at
\[
\eta^*(T) =  -\frac{1}{2\gamma T}W_{-1}\left(-\frac{B_0^2}{2\gamma T A_0^2}\right)
\]
where $W_{-1}$ denotes the lower real branch of the Lambert
$W$ function~\citep{2020ebe}. Moreover, $\eta^*(T)
=
\frac{1}{2\gamma}
\frac{\log T}{T}
+
O\!\left(
\frac{\log\log T}{T}
\right),$
and $w\!\left(
T,\eta^*(T)
\right)
\sim
B_0
\left(
\frac{1}{2\gamma}
\right)^{1/2}
\left(
\frac{\log T}{T}
\right)^{1/2}.$

\end{enumerate}
\end{lemma}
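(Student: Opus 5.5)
Both parts are elementary calculus on the explicit bound $w(t,\eta)=A_0e^{-\gamma\eta t}+B_0\sqrt{\eta}$ from~\eqref{gaja}, which Theorem~\ref{thm:MS-rec} already supplies for $0<\eta\le\bar\eta$. Note that $B_0>0$, since $c_{\mathrm{disc}}>0$ and $\kappa>0$.

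\textbf{Part 1.} Fix $\eta\in(0,\bar\eta]$. Monotonicity in $t$ gives, for $t\ge T^*(\eta)$,
\[
t\ge \frac{1}{\gamma\eta}\left[\log\!\left(\frac{A_0}{B_0\sqrt\eta}\right)\right]_+ ,
\qquad\text{hence}\qquad
e^{-\gamma\eta t}\le \min\left\{1,\frac{B_0\sqrt\eta}{A_0}\right\}.
\]
We split into two cases. If $A_0\le B_0\sqrt\eta$, the positive part vanishes and we use $A_0e^{-\gamma\eta t}\le A_0\le B_0\sqrt\eta$ directly. Otherwise $A_0e^{-\gamma\eta t}\le B_0\sqrt\eta$. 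In both cases $w(t,\eta)\le 2B_0\sqrt\eta$. When $A_0=0$ the exponential term is absent, so $T^*(\eta)=0$ works.

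\textbf{Part 2.} Fix $T$ and treat $w(T,\cdot)$ as a smooth function on $(0,\bar\eta]$. Its derivative is
\[
\partial_\eta w=-A_0\gamma T e^{-\gamma\eta T}+\frac{B_0}{2\sqrt\eta}.
\]
As $\eta\downarrow0$, $\partial_\eta w\to+\infty$. Setting $\partial_\eta w=0$ gives $\sqrt\eta\, e^{-\gamma\eta T}=B_0/(2A_0\gamma T)$. Squaring and substituting $u=-2\gamma T\eta$ yields
\[
u\,e^{u}=-\frac{B_0^2}{2\gamma TA_0^2}=:z .
\]
For $T$ large, $z\in(-1/e,0)$, so there are exactly two real solutions, $u=W_0(z)$ and $u=W_{-1}(z)$.

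The function $h(\eta)=\sqrt\eta\,e^{-\gamma T\eta}$ increases on $(0,1/(2\gamma T))$ and decreases afterwards. The sign of $\partial_\eta w$ is that of $B_0/(2A_0\gamma T)-h(\eta)$. So $w$ increases, then decreases past the $W_0$ root (near $\eta\approx B_0^2/(4A_0^2\gamma^2T^2)$, a local maximum), then increases again past the $W_{-1}$ root, which is therefore a local minimum.

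Comparing this local minimum with the boundary values:
\begin{itemize}
\item As $\eta\to0$, $w\to A_0$.
\item At $\eta=\bar\eta$, $w\approx B_0\sqrt{\bar\eta}$.
\item At the candidate $\eta^*(T)$, $w$ is $O(\sqrt{\log T/T})$.
\end{itemize}
For $T\ge T_0$ the candidate value is smaller than both boundary values, and also $\eta^*(T)<\bar\eta$. Hence $\eta^*(T)=-W_{-1}(z)/(2\gamma T)$ is the global minimizer.

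\textbf{Asymptotics.} Use the standard expansion $W_{-1}(z)=L_1-L_2+o(1)$ as $z\uparrow0$, where $L_1=\log(-z)$ and $L_2=\log(-\log(-z))$. Here $-L_1=\log(2\gamma TA_0^2/B_0^2)=\log T+O(1)$ and $L_2=\log\log T+o(1)$. Therefore
\[
\eta^*(T)=\frac{\log T}{2\gamma T}+O\!\left(\frac{\log\log T}{T}\right).
\]
For the value, the stationarity relation gives
\[
A_0e^{-\gamma T\eta^*}=\frac{B_0}{2\gamma T\sqrt{\eta^*}}=O\big((T\log T)^{-1/2}\big),
\]
which is $o(\sqrt{\log T/T})$. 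Hence
\[
w(T,\eta^*)\sim B_0\sqrt{\eta^*}\sim B_0(2\gamma)^{-1/2}(\log T/T)^{1/2}.
\]

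\textbf{Main obstacle.} The delicate step is the global-minimizer argument: identifying the correct branch and choosing $T_0$ so that $z>-1/e$, $\eta^*(T)\le\bar\eta$, and the interior minimum beats both endpoints. I would handle this by explicit inequalities using the expansion of $W_{-1}$ above.
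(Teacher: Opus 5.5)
Your proposal is correct and follows the paper's proof essentially step for step. Part 1 uses the same direct argument from the definition of $T^*(\eta)$. Part 2 uses the same derivative and Lambert-$W$ reduction, with the $W_0$ root a local maximum and the $W_{-1}$ root a local minimum, and the same $W_{-1}$ expansion combined with the first-order condition to show that the transient term is $o(\sqrt{\log T/T})$.

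Your sign analysis via $h(\eta)=\sqrt{\eta}\,e^{-\gamma T\eta}$, together with the comparison to $\lim_{\eta\downarrow0}w=A_0$, supplies the global-minimality justification that the paper only asserts. The comparison with $w(T,\bar\eta)$ is unnecessary, because $w(T,\cdot)$ is increasing on $[\eta^*(T),\bar\eta]$.
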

% \begin{figure*}[t]
%     \centering
%     \begin{minipage}[t]{0.48\textwidth}
%         \centering
%         \includegraphics[width=\linewidth, height =4.2cm]{Current/etaVT_1.png}
%         \caption*{(a) Optimal step size $\eta^*(T)$.}
%     \end{minipage}
%     \hfill
%     \begin{minipage}[t]{0.48\textwidth}
%         \centering
%         \includegraphics[width=\linewidth]{Current/etaVT_2.png}
%         \caption*{(b) Iteration threshold $T^*(\eta)$.}
%     \end{minipage}

%     \caption{
%     Illustration of the relationship between the step size and the iteration
%     horizon following Lemma \ref{lem:fixedstepopt}. Left: the optimal step size $\eta^*(T)$ as a function of $T$.
%     Right: the iteration threshold $T^*(\eta)$ as a function of $\eta$. Here we have set $\gamma = 1$, $A_0 = 2$ and $B_0 = 1$.
%     }
%     \label{fig:eta-T}
% \end{figure*}

%\subsection{Maximum marginal likelihood estimation}\label{sec:mmle}

Thus a cold-start fixed-step implementation optimally chooses
$\eta^*(T)\asymp(\log T)/T$ and achieves
$W_2=\widetilde O(T^{-1/2})$. Equivalently, at a prescribed small
step size $\eta$, the chain requires
$O(\eta^{-1}\log(1/\eta))$ iterations before its transient error
reaches the $O(\sqrt{\eta})$ floor. 

\begin{remark}[sFO-MLA and marginal likelihood estimation]\label{mmle}The fixed-step characterization in Lemma~\ref{lem:fixedstepopt} is
also useful when sFO-MLA is embedded as an inner sampler within a
larger stochastic algorithm. For example, in latent-variable maximum
marginal likelihood estimation, Fisher's identity expresses the
observed-data score as an expectation with respect to a conditional
latent-variable distribution. When exact sampling from this
conditional distribution is unavailable, sFO-MLA can be used to
approximate the required score. 
%The $O(\sqrt{\eta})$ post-burn-in
%Wasserstein error in Lemma~\ref{lem:fixedstepopt} then gives an explicit
%relationship between the outer stochastic-approximation step size and
%the inner sFO-MLA step size. 
We develop this application and prove a
corresponding almost-sure convergence result in the supplementary
material.\end{remark}

The purpose of the fixed-step analysis above is to identify how stochastic first-order noise changes the finite-time error floor and, crucially, how this floor interacts with the run length. The resulting \(O(\sqrt{\eta})\) floor and \(O(\eta^{-1}\log(1/\eta))\) cold-start burn-in reveal inefficiencies that form the starting point for the warm-started two-loop construction developed next.

\section{Two-loop \ref{eq:FO}}\label{sec:twoloop}
For a fixed step size $\eta$, \eqref{eq:main-W2-generic-fo} suggests that the recursion in~\eqref{eq:FO} may not converge
to the dual target $\pistar$ since the
discretization and stochastic-gradient errors in~\eqref{eq:main-W2-generic-fo} persist even as $t \to \infty$ (whereas the transient error decays to zero). Lemma~\ref{lem:fixedstepopt}
further shows that for each fixed $\eta$, there is a threshold $T^*(\eta)$ such that for $t \geq T^*(\eta)$ the transient error is no longer the dominant contribution. This implies that continuing
to run the chain past $T^*(\eta)$ at the same step size $\eta$ yields diminishing benefit, and any further
improvement in terms of reducing the $W_2$ distance from $\pi^*_{\varphi}$ requires reducing $\eta$. This observation motivates  the two-loop sFO-MLA idea (see Algorithm~\ref{alg:imla}) consisting of an outer loop indexed by $k$, and an inner loop indexed by $t$. At outer iteration (or epoch) $k$, Algorithm~\ref{alg:imla} fixes an
Euler--Maruyama step size $\eta_k$ and executes~\eqref{eq:FO} for
$\ell_k$ inner iterations, that is, for an epoch of size $\ell_k$. The terminal state of the $k$-th outer iteration is then used to
initialize the $(k+1)$-th outer iteration, which operates at a smaller step size
$\eta_{k+1}$. This two-loop structure embodies a ``warm start,'' whereby each successive outer iteration begins from an increasingly accurate
distribution rather than restarting from the original initialization. The
outer-loop dual iterates are then mapped back to the
primal via $X_k=\nabla\varphi^*(Y_k)=(\nabla\varphi)^{-1}(Y_k).$

\begin{remark}[Debiasing using the MH filter]\label{rem:debiasingMH}
 The MH filter~\citep{chib1995understanding}, the classical debiasing alternative to the two-loop idea, involves proposing a new state from $q(\cdot \mid x)$ from the incumbent state $x$, and then accepting the proposed state according to the acceptance probability $\alpha_{\mathrm{MH}}(x,\cdot)  = \min\left\{1,\frac{\pi(\cdot)q(x \mid \cdot)}{\pi(x)q(\cdot \mid x)}\right\}$. 
 Computing this ratio presents two challenges in the current stochastic first-order context.  First, since (only) a noisy oracle is available, replacing $\pi(x) \propto e^{-f(x)}$ by $\hat{\pi}(x) \propto e^{-F(x, \zeta)}$ in the MH acceptance ratio $\alpha_{\mathrm{MH}}(x,\cdot)$ will not achieve debiasing. Second, as discussed in Remark~\ref{rem:proposal_tractability}, the marginal proposal density $q(\cdot \mid x)$ under sFO-MLA may not be available. 
 %Indeed, we provide one such example involving doubly intractable Bayesian inference in Section \ref{sec:numerical_experiments}. Moreover, pseudo-marginal MCMC~\citep{andrieu2009pseudo} does not
%Moreover, pseudo-marginal MCMC does not directly apply, since it requires a nonnegative unbiased estimator
%of the unnormalized target density $e^{-f(x)}\mathbf{1}_{\mathcal{X}}(x)$.
\end{remark}
\begin{algorithm}
\caption{Two-Loop sFO-MLA}\label{alg:imla}
\scalebox{0.8}{ 
\begin{tabular} {l}
  
  \textbf{Input:} initial step $\eta_0$; initial iterate $X_0$; step sizes $\{\eta_k, k \geq 0\}$; \\ epoch lengths $\{\ell_k, k \geq 0\}$ \\ 
  \textbf{Output:} sFO-MLA iterates ${X}_k, k=0,1,2,\ldots$\\
  0 Set $Y_0=\nabla\varphi(X_0)$ \\
  1 \textbf{for} $k = 0,1,2, \ldots $ \textbf{do} \hspace{0.5in}{\scriptsize\# outer loop: update EM step size $\eta_k$} \\
  2 ~~Set $\eta_k, \ell_k$ \\
  3 ~~${Y}_{0,k} \leftarrow {Y}_k, \quad {X}_{0,k} \leftarrow {X}_k, $ \\
  4 ~~\textbf{for} $t=0,1,2,\ldots, \ell_k-1$ \hspace{0.25in}{\scriptsize\# inner loop: run sFO-MLA for $\ell_k$ steps}\\
  5 ~~~~Use~\eqref{eq:FO} to obtain ${X}_{t+1,k},{Y}_{t+1,k}$ \hspace{0.5in}{}\\
  6 ~~\textbf{end for} \\
  7~~${Y}_{k+1} \leftarrow {Y}_{\ell_k,k}, \, {X}_{k+1} \leftarrow {X}_{\ell_k,k}$ \\
  8 \textbf{end for}\\
   
 \end{tabular}
 }
\end{algorithm} 

Given the warm-started two-loop framework of
Algorithm~\ref{alg:imla}, the main question is how the step size
$\eta_k$ and epoch length $\ell_k$ should be related so that the
outer-loop laws $\mu_k:=\mathcal L(Y_k)$ converge to $\pi_\varphi^*$,
and at what rate this convergence occurs. Towards facilitating such analysis, we write the one-epoch bound of~\eqref{gaja} as:
\begin{equation}
D_{k+1}
\le
e^{-\gamma\eta_k\ell_k}D_k
+
B_0\eta_k^{1/2},
\, \,
D_k:=W_2(\mu_k,\pi_\varphi^*).
\label{eq:epoch-generic}
\end{equation} We are now ready to state the mixing rate of iterates generated by the two-loop algorithm. 
%under dual strong monotonicity and smoothness restrictions.

\begin{theorem}[Two-Loop sFO-MLA Bounds Under Dual Strong Monotonicity]
\label{thm:two-loop-short}
Suppose the assumptions of the corresponding fixed-step result hold
uniformly over epochs. For $0<\beta< 1$ and $c_0>0$, let $\eta_k=\eta_0\beta^k,\, \ell_k=
\left\lceil\frac{c_0}{\eta_k}\right\rceil$ and define $q_{1/2} \coloneqq e^{-\gamma c_0}\beta^{-1/2}.$
If $T_k:=\sum_{j=0}^{k-1}\ell_j$
denotes the cumulative number of inner iterations, then
$T_k\asymp\beta^{-k}$ and
\begin{equation}
D_k
=
\begin{cases}
O(T_k^{-1/2}), & q_{1/2}<1,\\[1mm]
O\!\left((\log T_k)T_k^{-1/2}\right), & q_{1/2}=1,\\[1mm]
O\!\left(
T_k^{-\gamma c_0/|\log\beta|}
\right), & q_{1/2}>1.
\end{cases}
\label{eq:two-loop-rates}
\end{equation}
In particular, if $c_0>
\frac{1}{2\gamma}|\log\beta|,$
then $D_k=O(T_k^{-1/2}).$
\end{theorem}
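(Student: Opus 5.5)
The plan is to unroll the one-epoch recursion~\eqref{eq:epoch-generic} and then translate epoch index into cumulative iteration count. First, since $\eta_k=\eta_0\beta^k\le\eta_0\le\bar\eta$ for all $k$ (assuming $\eta_0\le\bar\eta$, which is what ``assumptions hold uniformly over epochs'' provides), Theorem~\ref{thm:MS-rec} applies in every epoch. The initial law of epoch $k$ is $\mu_k$, so~\eqref{eq:epoch-generic} holds. Here I use that the bound of Theorem~\ref{thm:MS-rec} involves only the initial $W_2$ distance and not the specific initial law, so the warm start is legitimate. Because $\ell_k\ge c_0/\eta_k$, we get $e^{-\gamma\eta_k\ell_k}\le e^{-\gamma c_0}=:r$. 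Hence
\[
D_{k+1}\le r D_k + B_0\eta_0^{1/2}\beta^{k/2}.
\]
Iterating gives
\[
D_k\le r^k D_0 + B_0\eta_0^{1/2}\sum_{j=0}^{k-1} r^{k-1-j}\beta^{j/2}.
\]

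Next, I will evaluate the sum by factoring out $\beta^{(k-1)/2}$:
\[
\sum_{j=0}^{k-1} r^{k-1-j}\beta^{j/2}=\beta^{(k-1)/2}\sum_{i=0}^{k-1} q_{1/2}^{\,i},
\qquad q_{1/2}=r\beta^{-1/2}.
\]
The three cases then follow directly.
\begin{itemize}
\item If $q_{1/2}<1$, the geometric sum is bounded by $(1-q_{1/2})^{-1}$. Also $r^k=\beta^{k/2}q_{1/2}^k\le\beta^{k/2}$. Hence $D_k=O(\beta^{k/2})$.
\item If $q_{1/2}=1$, the sum equals $k$ and $r^k=\beta^{k/2}$. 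Hence $D_k=O(k\beta^{k/2})$.
\item If $q_{1/2}>1$, the sum is $O(q_{1/2}^k)$. Then $\beta^{k/2}q_{1/2}^k=r^k$, so $D_k=O(r^k)=O(e^{-\gamma c_0 k})$, with the $D_0$ term of the same order.
\end{itemize}

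Then I convert to $T_k$. We have $c_0/\eta_j\le\ell_j\le c_0/\eta_j+1$, so
\[
T_k=\frac{c_0}{\eta_0}\cdot\frac{\beta^{-k}-1}{\beta^{-1}-1}+O(k).
\]
This gives $T_k\asymp\beta^{-k}$ for $k\ge1$, since the $O(k)$ term is dominated by the geometric term. It follows that $\beta^{k/2}\asymp T_k^{-1/2}$ and that $k=\log T_k/|\log\beta|+O(1)$, which turns the critical case into $O((\log T_k)T_k^{-1/2})$. For the supercritical case,
\[
e^{-\gamma c_0 k}=\left(\beta^{k}\right)^{\gamma c_0/|\log\beta|}\asymp T_k^{-\gamma c_0/|\log\beta|}.
\]
Finally, $q_{1/2}<1$ is equivalent to $\gamma c_0>\tfrac12|\log\beta|$, which gives the ``in particular'' statement.

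I expect the only delicate step to be the justification of~\eqref{eq:epoch-generic} across epochs. One must check that Theorem~\ref{thm:MS-rec} may be applied with a random, data-dependent initial state $Y_k$ and with the noise filtration continuing across epochs. The conditional unbiasedness and variance bounds are stated conditionally on $\mathcal F_t$, so restarting the time index at the start of each epoch preserves them. I also need that $B_0$ in~\eqref{gaja} does not depend on the initial law or on $\eta$, which holds. Everything else is bookkeeping with geometric sums, where care is needed only with the ceiling in $\ell_k$ and the constant in $T_k\asymp\beta^{-k}$.
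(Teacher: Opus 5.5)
Your proposal is correct and follows the paper's proof essentially step for step. Both unroll \eqref{eq:epoch-generic} with contraction factor $e^{-\gamma c_0}$, factor out $\beta^{(k-1)/2}$ to obtain a geometric sum in $q_{1/2}$, and translate the three cases through $T_k\asymp\beta^{-k}$. You also fill in bookkeeping the paper leaves implicit, and all of it checks out: the explicit two-sided bounds on $T_k$, the relation $k=\log T_k/|\log\beta|+O(1)$, and the observation that warm starts are admissible because the fixed-step bound depends on the initial law only through its $W_2$ distance.
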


% \begin{proof}
% Since $\eta_k\ell_k\geq c_0$,~\eqref{eq:epoch-generic} gives $D_{k+1}
% \leq
% aD_k+B_0\eta_0^{1/2}\beta^{k/2}, a:=e^{-\gamma c_0}.$
% Iterating over $k$ yields $D_k
% \leq
% a^kD_0+
% B_0\eta_0^{1/2}
% \sum_{j=0}^{k-1}a^{k-1-j}\beta^{j/2}.$
% Equivalently, $D_k
% \leq
% a^kD_0+
% B_0\eta_0^{1/2}\beta^{(k-1)/2}
% \sum_{j=0}^{k-1}q_{1/2}^{\,j}, q_{1/2}=a\beta^{-1/2}.$
% The three cases in~\eqref{eq:two-loop-rates} now follow from the
% geometric sum and the fact that $T_k\asymp\beta^{-k}$.
% \end{proof}
%\begin{table}[t]
%\centering
%\small
%\renewcommand{\arraystretch}{1.22}
%\setlength{\tabcolsep}{3pt}
%\scalebox{0.9}{
%\begin{tabular}{llcc}
%\toprule
%\textbf{Method}
%&
%\textbf{Oracle}

%&
%\textbf{Fixed horizon}
%&
%\textbf{Two-loop}
%\\
%\midrule
%MLA
%&
%exact gradient
%&

%$\widetilde O(T^{-1/2})$
%&
%$O(T^{-1/2})$
%\\

%sFO-MLA
%&
%stoch.\ first-order
%&

%$\widetilde O(T^{-1/2})$
%&
%$O(T^{-1/2})$
%\\
%\bottomrule
%\end{tabular}
%}
%\caption{
%Finite-horizon and two-loop rates in the smooth strongly monotone
%regime. Here $T$ denotes the number of Markov-chain iterations, rather than
%the total number of scalar oracle evaluations. In all cases the optimal
%EM step size satisfies
%$\eta^*(T)\asymp(\log T)/T$.
%}
%\label{tab:comparison}
%\end{table}

Four observations about Theorem~\ref{thm:two-loop-short} are salient. First, the two-loop sFO-MLA algorithm is a general paradigm for debiasing iterates generated by  (fixed step) sFO-MLA. Two-loop sFO-MLA is especially useful in settings where the potential $f$ can be observed (only) with a stochastic first-order oracle, rendering classical debiasing alternatives such as MH challenging if not impossible to implement --- see Remark~\ref{rem:proposal_tractability} and Remark~\ref{rem:debiasingMH}, and also discussion in~\cite{2011welteh} and~\cite{2016tehthivol}. Second, if the outer loop step size sequence $\eta_k$ is chosen to decrease geometrically as $\eta_k = \eta_0\beta^k, \beta < 1$, and the diffusive epoch length $\ell_k$ is chosen such that $\ell_k = \lceil \frac{c_0}{\eta_k} \rceil$, then the $W_2$ distance between the outer-loop law $\mu_k$ and the dual target $\pi^*_{\varphi}$ decays as $O(T_k^{-1/2})$ as long as $\beta \in (e^{-2(m-\alpha)c_0},1).$ Since the effective gap $m-\alpha$ is generally unknown, there is a risk that the chosen $\beta \notin (e^{-2(m-\alpha)c_0},1)$ resulting in the slower rates appearing in~\eqref{eq:two-loop-rates}. Such deterioration in rate due to an accidental poor choice of parameter should be reminiscent of corresponding discussions~\citep{2009nemetal} in the context of stochastic optimization. Third, the absence of the logarithmic factor in the favorable two-loop regime is a consequence of warm starts. A fixed-step chain initialized
independently must be run for order $\eta^{-1}\log(1/\eta)$ iterations before its transient reaches its error floor. In contrast, epoch $k$ begins from the endpoint of epoch $k-1$, which is already close to the preceding, larger error floor; consequently, an epoch length of order $\eta_k^{-1}$ suffices to track the successively decreasing floors. Fourth, the statement of Theorem~\ref{thm:two-loop-short} uses the fixed-step bound in~\eqref{gaja}
uniformly across outer iterations, so that the contraction and error
constants do not depend on the law initializing a particular epoch.
This assumption is automatic when the structural conditions underlying
the fixed-step result hold globally. More generally, suppose that epoch
$k$ satisfies $W_2(\mu P_{\eta_k}^{\ell_k},\pistar)
\le
e^{-\gamma_k\eta_k\ell_k}
W_2(\mu,\pistar)
+
B_k\eta_k^{1/2},$ where $P_{\eta_k}$ is the Markov transition kernel of one~\eqref{eq:FO} step in the dual space at fixed step size $\eta_k$,  
and $\gamma_k>0$, $B_k<\infty$ are epoch-dependent constants.
If $\inf_k\gamma_k=\underline\gamma>0,\sup_k B_k<\infty,$
then the conclusions of Theorem~\ref{thm:two-loop-short} remain valid with the effective gap
$m-\alpha$ replaced by $\underline\gamma$. Thus uniformity of the
individual constants is not essential. %; what is needed is a uniform
%positive lower bound on the effective contraction and a uniform upper
%bound on the fixed-step error constant.

Theorem~\ref{thm:two-loop-short} relies on
Assumptions~\ref{ass:A1}--\ref{ass:A4} through the exponential
transient bound in~\eqref{gaja}. In particular, dual strong
monotonicity together with the positive effective gap $m-\alpha>0$ yields
exponential forgetting of the initial distribution. The proof of Theorem~\ref{thm:two-loop-short},
however, reveals that such exponential forgetting is not essential to
the two-loop principle. Instead, what is needed is sufficient contraction
over each epoch relative to the decrease in the fixed-step error
floor. We formalize this observation by replacing
Assumptions~\ref{ass:A1}--\ref{ass:A4} with the following two
conditions.

%We have proved the two-loop sFO-MLA mixing result in Theorem~\ref{thm:two-loop-short} under  Assumptions~\ref{ass:A1}--\ref{ass:A4} which lead crucially to the bound in~\eqref{gaja}. In particular, the dual strong monotonicity in Assumption~\ref{ass:A1} along with the assumed positive gap in Assumption~\ref{ass:A4} lead to the exponentially decaying transient term in~\eqref{gaja}, thereby ensuring that the fixed step chain forgets its initial conditions exponentially fast in the two-loop analysis.  Interestingly, however, the proof of Theorem~\ref{thm:two-loop-short} suggests that two-loop's underlying principle is more fundamental, and unreliant on such exponential forgetting of initial conditions. To make this idea precise, let's make the following two assumptions in lieu of Assumptions~\ref{ass:A1}--\ref{ass:A4}.

\begin{assumption}[Polynomial fixed-step mixing]
\label{ass:A5}
There exist constants $\bar\eta>0$, $C_{\rm mix}\geq 1$, and $q>0$
such that, for every $\eta\in(0,\bar\eta]$, the fixed-step kernel
$P_\eta$ admits a unique invariant measure
$\pi_\eta^*\in\mathcal P_2(\mathbb R^d)$ satisfying
$W_2\!\left(\mu P_\eta^t,\pi_\eta^*\right)
\leq
C_{\rm mix}(1+\eta t)^{-q}
W_2\!\left(\mu,\pi_\eta^*\right)$
for every $\mu\in\mathcal P_2(\mathbb R^d)$ and every
$t\in\mathbb Z_+$.
\end{assumption}

\begin{assumption}[Fixed-step bias]
\label{ass:fixedstep-bias}
Let $\pi^*=\pi_\varphi^*$ denote the dual target.
There exist constants $C_{\rm bias}<\infty$ and $r>0$ such that $W_2\!\left(\pi_\eta^*,\pi^*\right)
\leq
C_{\rm bias}\eta^r, 0<\eta\leq\bar\eta.$
\end{assumption}
Assumptions~\ref{ass:A5}--\ref{ass:fixedstep-bias} are standard
types of conditions in the theory of subgeometrically ergodic Markov
chains and numerical approximations of ergodic diffusions. The polynomial
mixing in Wasserstein distance of Assumption~\ref{ass:A5} can be established, for example, from a
Foster--Lyapunov drift condition of the form
$P_\eta V(y)
\le
V(y)-c\eta V(y)^\theta+b\eta,0<\theta<1,$
together with a suitable local Wasserstein contraction or
$d$-small-set condition; see, e.g.,
\citet{butkovsky2014subgeometric} and
\citet{durmus2016subgeometric}. The particular form in
Assumption~\ref{ass:A5}, in which the transient term is
proportional to the initial Wasserstein error, may alternatively be
verified directly through a polynomial pairwise coupling estimate, something we do not detail here.

The fixed-step bias condition in Assumption~\ref{ass:fixedstep-bias} is a standard invariant-measure
approximation requirement for Euler-type discretizations. Under
sufficient smoothness of the drift and diffusion coefficients, moment
stability, and ergodicity uniformly in the step size, invariant measures
of Euler schemes converge to the invariant law of the underlying
diffusion --- see \citet{talay1990second} and
\citet{mattingly2002ergodicity}. In the stochastic-gradient setting,
unbiased gradient noise with uniformly controlled second moments adds a
further perturbation term, and Wasserstein error bounds of this type are
well established for Langevin schemes with inaccurate or stochastic
gradients; see, for example, \citet{dalalyan2019user} and
\citet{barkhagen2021stochastic}. In our setting these conditions motivate
the bound $W_2(\pi_\eta^*,\pi^*)\le C_{\rm bias}\eta^r,$
with $r=1/2$. 
%We are now ready to state the analogue of
%Theorem~\ref{thm:two-loop-short} under Assumption~\ref{ass:A5} and Assumption~\ref{ass:fixedstep-bias}.

\begin{theorem}[Generic Two-Loop sFO-MLA Bounds]
\label{thm:twoloop-poly}
Suppose Assumptions~\ref{ass:A5}
and~\ref{ass:fixedstep-bias} hold. Let
$\eta_k=\eta_0\beta^k,\ell_k=\left\lceil\frac{c_0}{\eta_k}\right\rceil,k\geq0,$
where $\eta_0\in(0,\bar\eta]$, $\beta\in(0,1)$, and $c_0>0$.
Let $\mu_k$ denote the law of the dual iterate at the beginning
of epoch $k$, so that $\mu_{k+1}
=
\mu_k P_{\eta_k}^{\ell_k},$
and define $D_k:=W_2(\mu_k,\pi^*)$ and the cumulative number of inner iterations $T_k:=\sum_{j=0}^{k-1}\ell_j \asymp\beta^{-k}$. 

Set $\rho
:=
C_{\rm mix}(1+c_0)^{-q},$ assume $\rho<1$, and define $Q_r:=\rho\beta^{-r}.$
Then $D_{k+1}
\leq
\rho D_k+B_r\eta_k^r,B_r=(1+C_{\rm mix})C_{\rm bias},$
and, consequently,
\[
D_k=
\begin{cases}
O(T_k^{-r}),
&
Q_r<1,\\[2mm]
O((\log T_k)T_k^{-r}),
&
Q_r=1,\\[2mm]
O(T_k^{-s}),
&
Q_r>1,
\end{cases}
\]
where $s:=
\frac{-\log\rho}{|\log\beta|}.$ In particular, if $r=1/2$ and $c_0>
\left(C_{\mathrm{mix}}\beta^{-r}\right)^{1/q}-1,$
then $D_k=O(T_k^{-1/2}).$
\end{theorem}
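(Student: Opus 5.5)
The plan is to first derive the one-epoch recursion $D_{k+1}\le\rho D_k+B_r\eta_k^r$, and then unroll it as in the (commented) proof of Theorem~\ref{thm:two-loop-short}. For the recursion, apply the triangle inequality through the fixed-step invariant measure $\pi_{\eta_k}^*$:
\[
D_{k+1}=W_2(\mu_kP_{\eta_k}^{\ell_k},\pi^*)\le W_2(\mu_kP_{\eta_k}^{\ell_k},\pi_{\eta_k}^*)+W_2(\pi_{\eta_k}^*,\pi^*).
\]
Since $\eta_k\le\eta_0\le\bar\eta$, Assumption~\ref{ass:A5} bounds the first term by $C_{\rm mix}(1+\eta_k\ell_k)^{-q}W_2(\mu_k,\pi_{\eta_k}^*)$. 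Because $\eta_k\ell_k\ge c_0$, this is at most $\rho\,W_2(\mu_k,\pi_{\eta_k}^*)$. A second triangle inequality gives $W_2(\mu_k,\pi_{\eta_k}^*)\le D_k+W_2(\pi^*,\pi_{\eta_k}^*)$. Applying Assumption~\ref{ass:fixedstep-bias} to both bias terms then yields
\[
D_{k+1}\le\rho D_k+(\rho+1)C_{\rm bias}\eta_k^r\le\rho D_k+(1+C_{\rm mix})C_{\rm bias}\eta_k^r,
\]
where the last step uses $\rho<1\le C_{\rm mix}$. Implicitly, $\mu_k\in\mathcal P_2$ at every epoch; this holds inductively, since $D_k<\infty$ and $\pi^*\in\mathcal P_2$ (the latter because $\pi_\eta^*\in\mathcal P_2$ and the bias is finite).

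Next, unroll the recursion with $\eta_j=\eta_0\beta^j$:
\[
D_k\le\rho^kD_0+B_r\eta_0^r\sum_{j=0}^{k-1}\rho^{k-1-j}\beta^{rj}=\rho^kD_0+B_r\eta_0^r\beta^{r(k-1)}\sum_{j=0}^{k-1}Q_r^{\,k-1-j},
\]
using $\rho^{k-1-j}\beta^{rj}=\beta^{r(k-1)}(\rho\beta^{-r})^{k-1-j}$. The three cases then follow:
\begin{itemize}
\item If $Q_r<1$, the geometric sum is bounded by $(1-Q_r)^{-1}$, giving a contribution $O(\beta^{rk})$. Moreover $\rho^k=Q_r^k\beta^{rk}\le\beta^{rk}$, so $D_k=O(\beta^{rk})$.
\item If $Q_r=1$, the sum equals $k$, so $D_k=O(k\beta^{rk})$; here again $\rho^k=\beta^{rk}$.
\item If $Q_r>1$, the sum is $O(Q_r^{k})$, and $\beta^{rk}Q_r^k=\rho^k$, so $D_k=O(\rho^k)$.
\end{itemize}

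To convert to $T_k$, use $c_0/\eta_j\le\ell_j\le c_0/\eta_j+1$. Summing gives
\[
T_k=\frac{c_0}{\eta_0}\cdot\frac{\beta^{-k}-1}{\beta^{-1}-1}+O(k),
\]
hence $T_k\asymp\beta^{-k}$ for $k\ge1$, and the $O(k)$ term is dominated. Therefore $\beta^{rk}\asymp T_k^{-r}$, $k\asymp\log T_k/|\log\beta|$, and $\rho^k=\beta^{k\,(-\log\rho)/|\log\beta|}\asymp T_k^{-s}$. This gives the three stated rates. For the final claim with $r=1/2$, the condition $c_0>(C_{\rm mix}\beta^{-r})^{1/q}-1$ is equivalent to $C_{\rm mix}(1+c_0)^{-q}<\beta^{r}$, i.e.\ $Q_r<1$. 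This also forces $\rho<\beta^r<1$, so the hypothesis $\rho<1$ is automatic there.

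The main obstacle is the first step: Assumption~\ref{ass:A5} contracts toward $\pi_{\eta_k}^*$, not toward $\pi^*$. The reference measure therefore changes each epoch, and the bias must be paid twice, once on entry and once on exit, which is where the constant $1+C_{\rm mix}$ comes from. Beyond that, the remaining care is in the bookkeeping of $T_k\asymp\beta^{-k}$ despite the ceilings, and in making sure the implied constants in $\asymp$ depend only on $c_0,\eta_0,\beta$.
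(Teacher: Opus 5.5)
Your proposal is correct and follows essentially the same route as the paper: a two-step triangle inequality through $\pi^*_{\eta_k}$ gives $D_{k+1}\le\rho D_k+(1+C_{\rm mix})C_{\rm bias}\eta_k^r$, which is unrolled into $\rho^kD_0+B_r\eta_0^r\beta^{r(k-1)}\sum_{j=0}^{k-1}Q_r^{\,j}$, and the three regimes are then read off using $T_k\asymp\beta^{-k}$. Your additional bookkeeping (the ceilings in $T_k$, $\mu_k\in\mathcal P_2$ at each epoch, and the equivalence of the final condition on $c_0$ with $Q_r<1$) fills in details the paper leaves implicit.
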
 

% \begin{proof}
% By Assumptions~\ref{ass:A5} and~\ref{ass:fixedstep-bias},
% the triangle inequality gives, for any $\mu\in\mathcal P_2(\mathbb R^d)$, $W_2(\mu P_\eta^t,\pi^*)
% \leq
% C_{\rm mix}(1+\eta t)^{-q}W_2(\mu,\pi^*)
% +
% B_r\eta^r,$
% where $B_r=(1+C_{\rm mix})C_{\rm bias}$.
% Applying this bound over epoch $k$ and using
% $\eta_k\ell_k\geq c_0$ yields $D_{k+1}\leq \rho D_k+B_r\eta_0^r\beta^{rk}, \rho=C_{\rm mix}(1+c_0)^{-q}.$
% Iteration gives $D_k
% \leq
% \rho^kD_0+
% B_r\eta_0^r
% \beta^{r(k-1)}
% \sum_{j=0}^{k-1}Q_r^j,Q_r=\rho\beta^{-r}.$
% The three regimes follow from the geometric sum and
% $T_k\asymp\beta^{-k}$.
% \end{proof}

Taken together, Theorems~\ref{thm:two-loop-short}
and~\ref{thm:twoloop-poly} identify the central mechanism behind the
two-loop construction. The geometric decrease in step size progressively
lowers the fixed-step error floor, while warm-starting from the terminal
distribution of the preceding epoch limits the additional effort required
to track these successively more accurate distributions. Importantly,
this mechanism is not tied to exponential mixing: it continues to operate
under polynomial fixed-step forgetting, provided that each epoch contracts
sufficiently relative to the reduction in its error floor. 

%We next examine
%the practical behavior of two-loop sFO-MLA in two settings that are distinguished by the source of oracle noise and the nature of constraint geometry. %The first is a large-scale Bayesian mixture model on the simplex,
%where stochasticity arises through data subsampling; the second
%is a doubly intractable undirected graphical model, where the stochastic
%first-order oracle is generated through Monte Carlo. %Together,
%these examples illustrate the algorithm under different constraint
%geometries and qualitatively different sources of oracle noise.

\section{Numerical experiments}\label{sec:numerical_experiments}
All experiments were performed on a MacBook Air equipped with an Apple M2 processor and 8 GB of unified memory, running macOS 26.6.2 on the ARM64 architecture. 
%The implementations used R 4.5.1, Rcpp 1.1.2, and Apple Clang 14.0.3. The implementations used no parallelization or GPU acceleration, and the BLAS backend was configured to use a single thread.

\begin{figure*}
    \centering

    \begin{minipage}[t]{0.23\textwidth}
        \centering
        \includegraphics[width=\linewidth, height = 4cm]{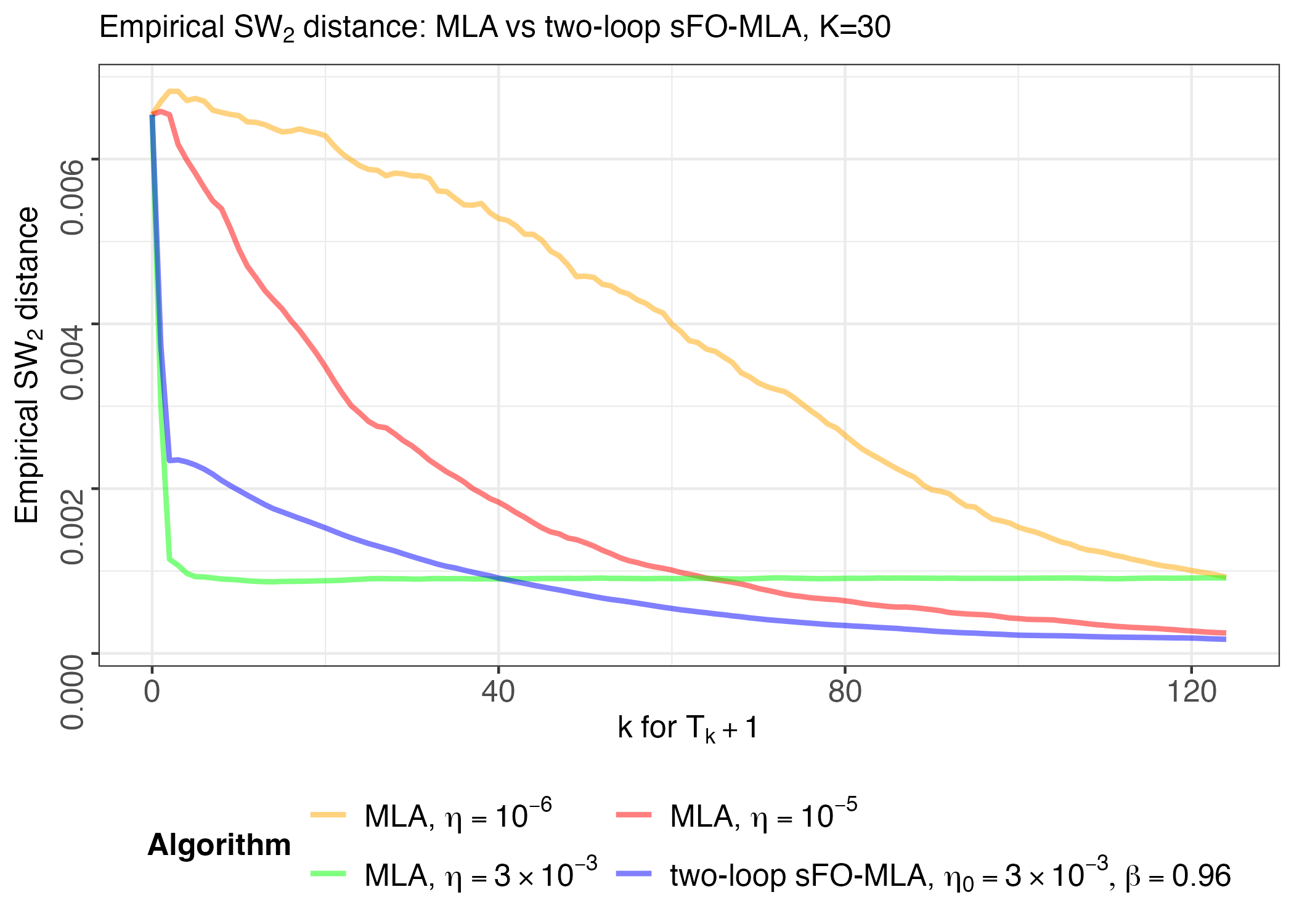}
    \end{minipage}
    \hfill
    \begin{minipage}[t]{0.23\textwidth}
        \centering
        \includegraphics[width=\linewidth, height = 4cm]{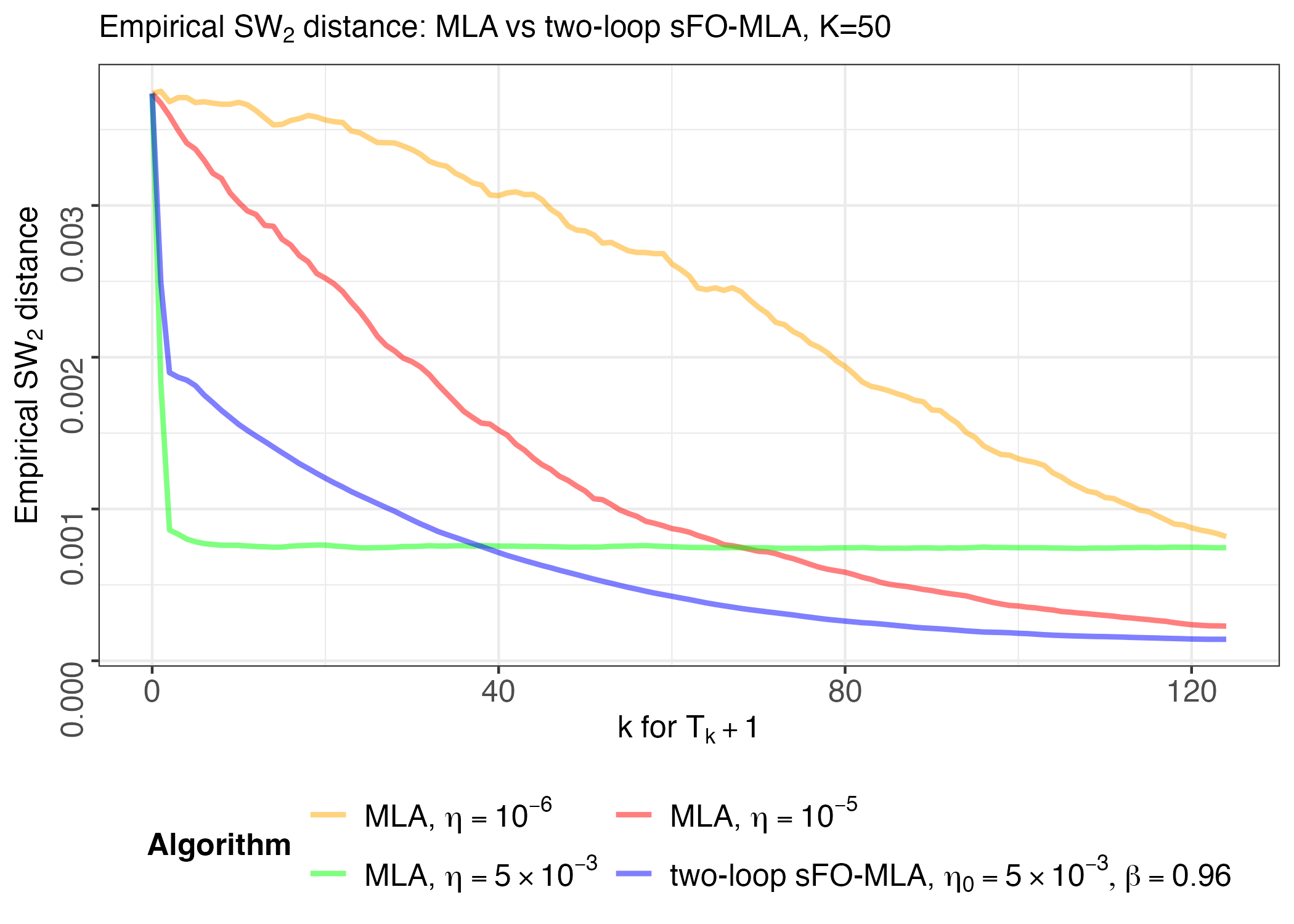}
    \end{minipage}
    \hfill
    \begin{minipage}[t]{0.23\textwidth}
        \centering
        \includegraphics[width=\linewidth, height = 4cm]{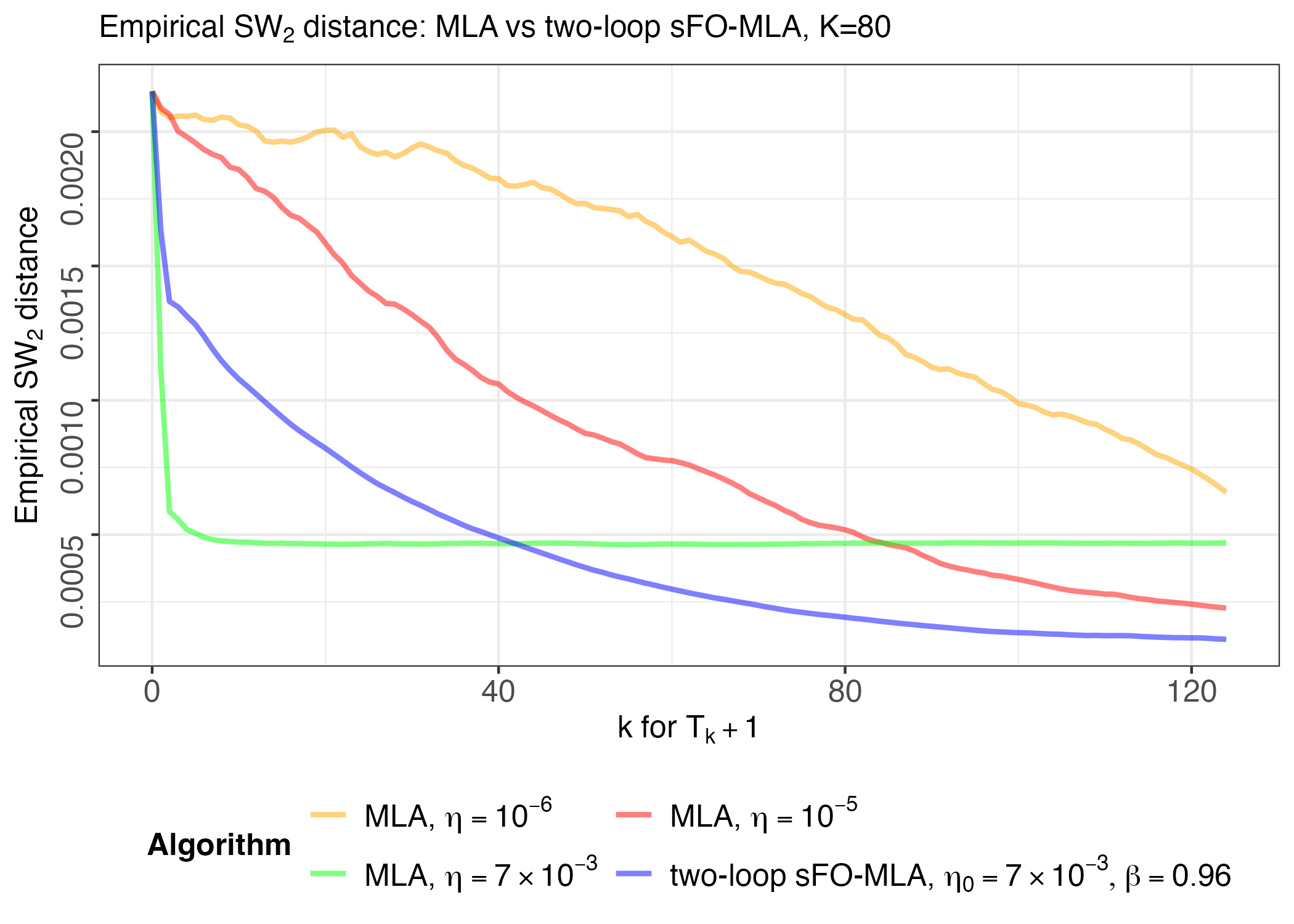}
    \end{minipage}
    \hfill
    \begin{minipage}[t]{0.23\textwidth}
        \centering
        \includegraphics[width=\linewidth, height = 4cm]{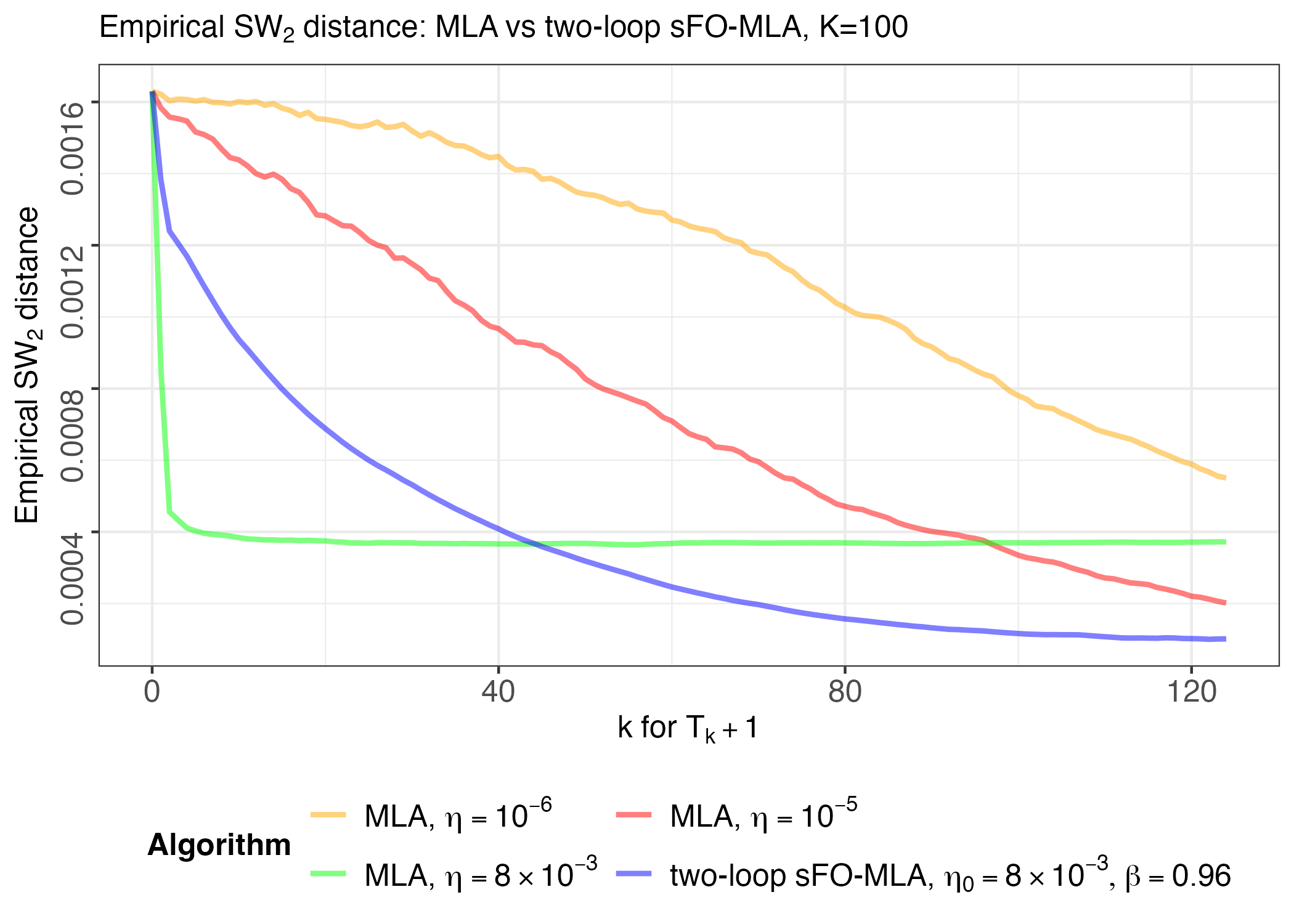}
    \end{minipage}
    \hfill
    \caption{Empirical order-2 sliced Wasserstein distance to MH-corrected MLA benchmark for $K=30,50,80,100$.}
    \label{fig:W2}
\end{figure*}

\begin{figure*}
    \centering
    \begin{minipage}[t]{0.23\textwidth}
        \centering
        \includegraphics[width=\linewidth, height = 4cm]{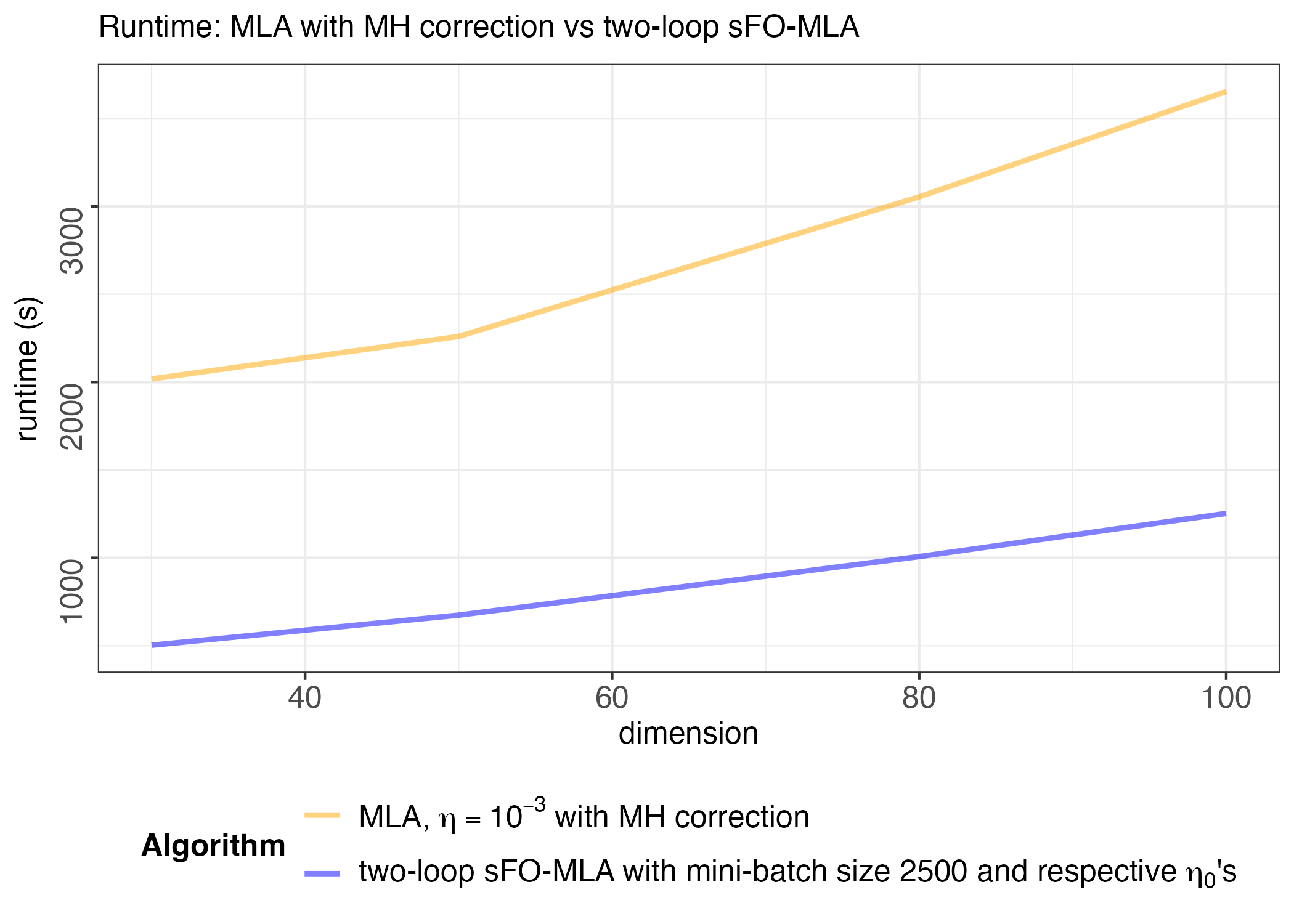}
    \end{minipage}
    \hfill
    \begin{minipage}[t]{0.23\textwidth}
        \centering
        \includegraphics[width=\linewidth, height = 4cm]{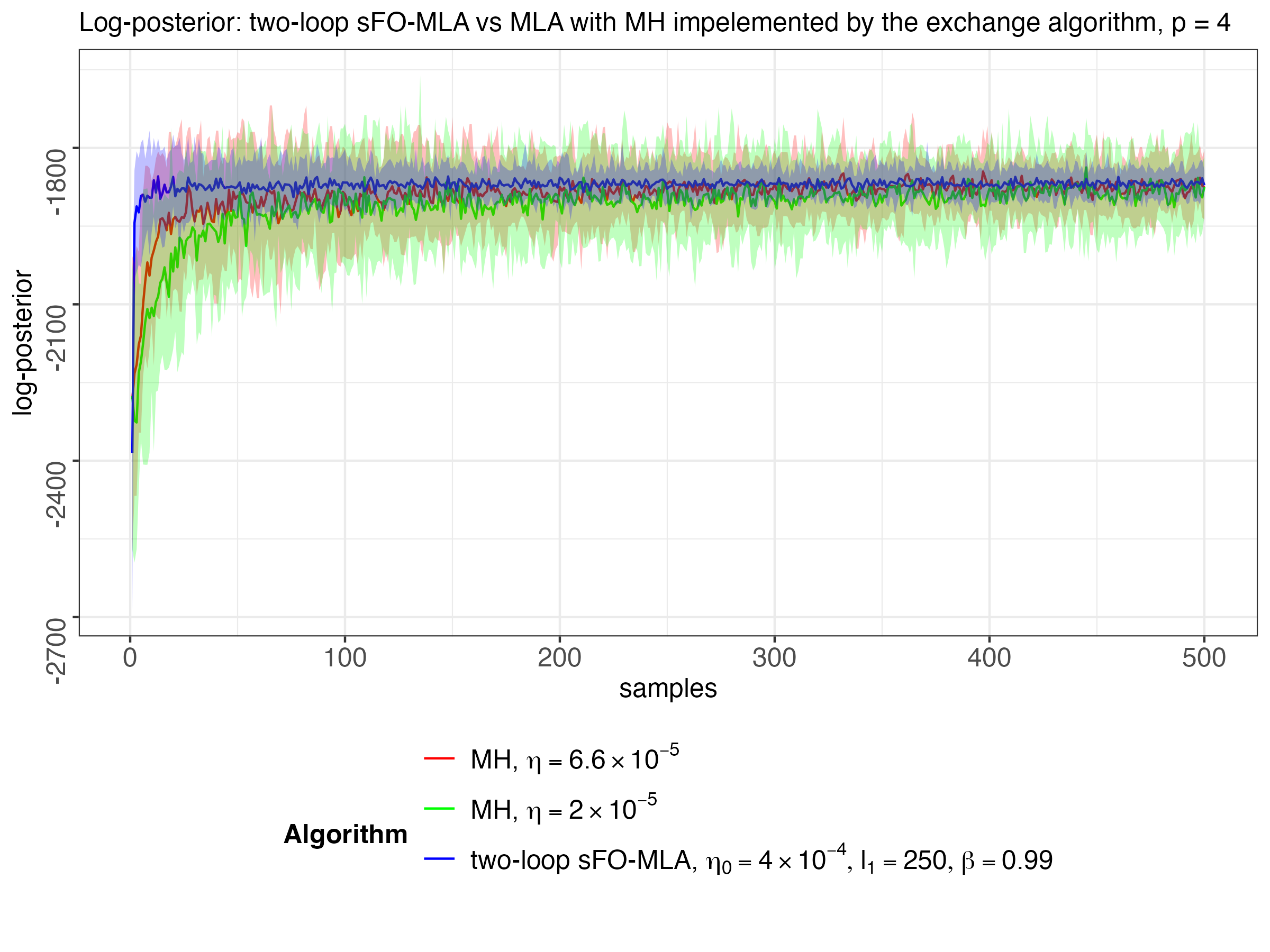}
    \end{minipage}
    \hfill
    \begin{minipage}[t]{0.23\textwidth}
        \centering
        \includegraphics[width=\linewidth, height = 4cm]{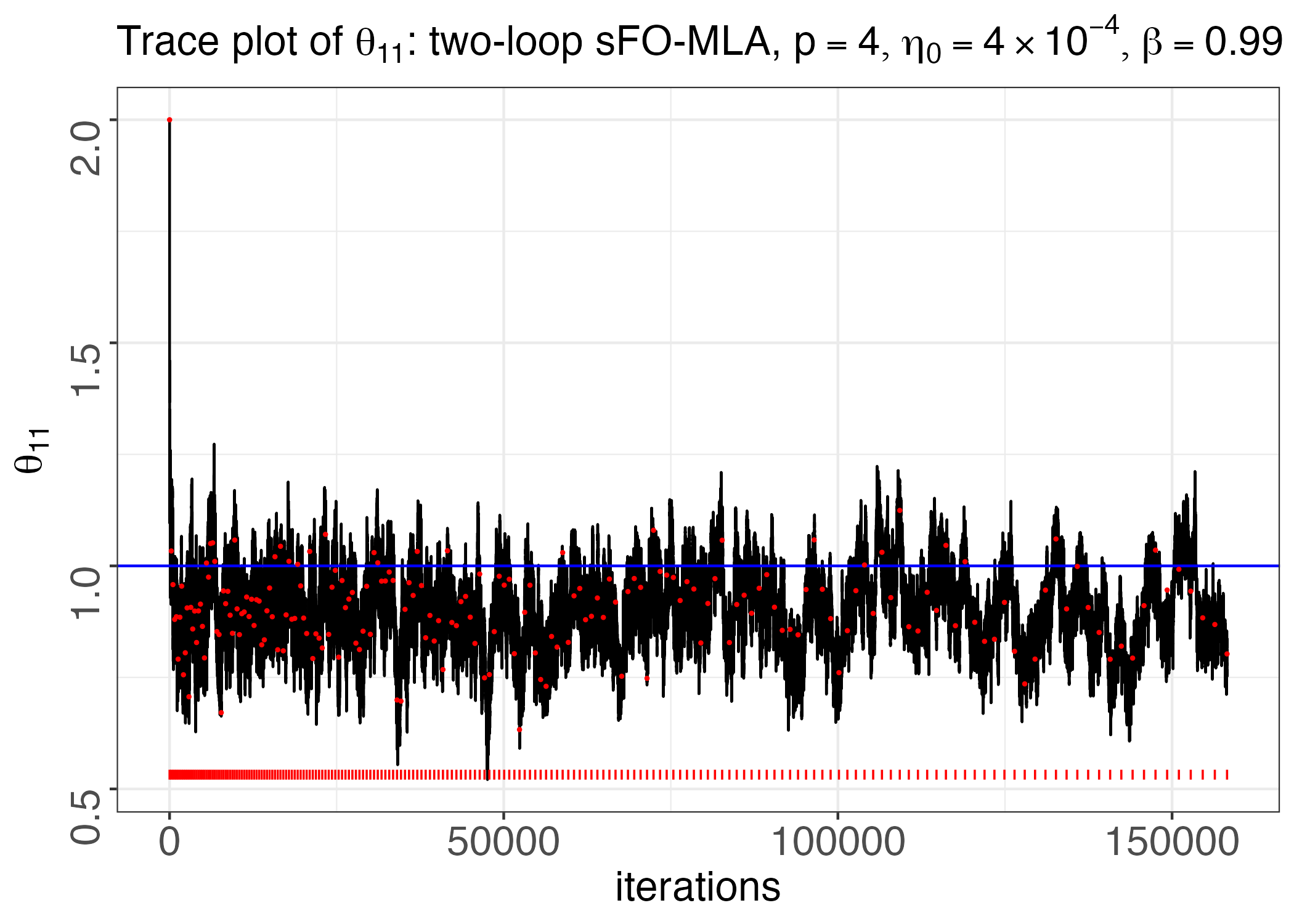}
    \end{minipage}
    \hfill
    \begin{minipage}[t]{0.23\textwidth}
        \centering
        \includegraphics[width=\linewidth, height = 4cm]{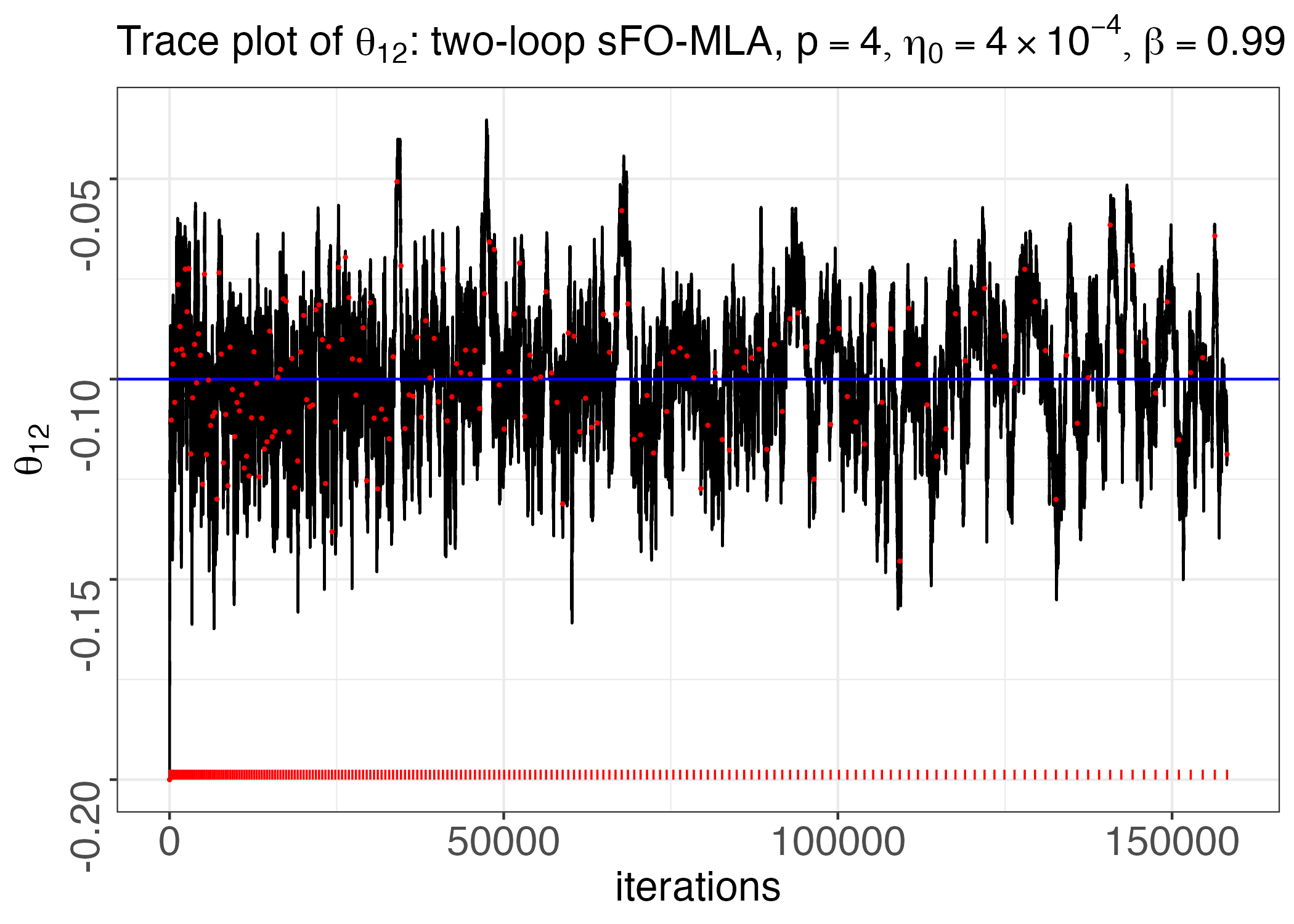}
    \end{minipage}
    \caption{Left to right: runtime of two-loop sFO-MLA and MH-corrected MLA in the simplex experiment; estimated log-posterior traces for two-loop sFO-MLA and the exchange algorithm in the PGM experiment with $p = 4$ and $90\%$ confidence bands; two-loop sFO-MLA traces of $\theta_{11}$ and $\theta_{12}$ for the PGM with $p = 4$.}
    \label{fig:traceplot}
\end{figure*}

%In our numerical experiments, we focus on two examples. The first one is a situation where the target distribution is supported over a simplex but computing the gradient involves large sums and the second one is an example which requires constrained sampling from a doubly intractable distribution. In both examples, we use the log-barrier mirror map; see Section \ref{sec:mirror_maps} for details. 
\subsection{Sampling over the simplex}\label{sec:simplex}
Consider observing  data $Z_i \overset{iid}{\sim} p_\theta(z) = \sum_{k=1}^K \theta_k q_k(z)$ where for simplicity we are going to assume that $q_k(\cdot)$ is completely known; we set this to be univariate Gaussian distribution with mean $k$ and standard deviation 0.2. The parameters of interest are the weights of the mixture distribution $\Theta = (\theta_1, \ldots, \theta_K)$. Of course, $\Theta \in \Delta_K = \{\omega \in \mathbb{R}^K: \omega_k \geq 0, \sum_{k=1}^K \omega_k = 1 \}$. Suppose for Bayesian inference, we specify a Dirichlet prior on $\Theta$, i.e. $\Theta \sim \text{Dirichlet}(\alpha_1, \ldots, \alpha_K)$. Then, the posterior distribution of $\Theta$ is $\pi(\Theta \mid Z_1, \ldots, Z_n) \propto \exp\{-f(\Theta)\} \mathbf{1}_{\Delta_K}(\Theta)$ where $f(\Theta) = -\sum_{i=1}^n \log \left( \sum_{k=1}^K \theta_k q_k(Z_i)\right) - \sum_{k=1}^K (\alpha_k - 1) \log \theta_k.$
We implement sampling in the $d=K-1$ free coordinates
$u=(\theta_1,\ldots,\theta_{K-1})$, with
$\theta_K=1-\sum_{r=1}^{K-1}u_r$; see Section~\ref{sec:mirror_maps} for further details. We compare two-loop sFO-MLA with a minibatch size of 25\% of the original data size, exact-gradient MLA, and MH-corrected MLA
with exact gradients~\citep{srinivasan2024fast}. The latter two methods are at an advantage here since both have access to the exact gradient whereas sFO-MLA uses an unbiased estimator of the gradient and decreases the step-sizes as prescribed by Lemma \ref{lem:fixedstepopt}.
%The latter accepts or
%rejects each proposal using~\eqref{eq:MH_ratio}. 
Experimental details
are in Section~\ref{sec:simplex_details}.

To compare two-loop sFO-MLA and fixed-step MLA, we measure the distance
between their empirical sample distributions and samples from MH-corrected
MLA used as a posterior benchmark. We use the \texttt{swdist} function
from the \texttt{R} package \texttt{T4transport}.
% AUTHOR CHECK: swdist computes a sliced Wasserstein distance. Confirm its
% order and settings, and update the W_2 labels in the supplied plot images.
The results are summarized in Figure~\ref{fig:W2}. Larger fixed step
sizes yield a faster initial decrease followed by a larger apparent error
floor, whereas smaller step sizes reduce this floor but converge more
slowly. The blue curves show that two-loop sFO-MLA combines rapid initial
progress with continued error reduction as the step size decreases.
The left panel of Figure~\ref{fig:traceplot} compares runtime with
MH-corrected MLA, which costs approximately two to three times as much in
these experiments.

\subsection{Doubly intractable models}\label{sec:PGM}
We consider sampling the posterior distribution of the Poisson Graphical model (PGM) \citep{yang2013poisson} which has an intractable normalizing constant. The probability function has the form $p(x;\theta) = E_\theta(x)/z(\theta)$. The parameter space $\Omega$ is defined as $\Omega = \{\theta^{p \times p}: z(\theta) < \infty, \theta= \theta'\}$. It is well-known that $\Omega = \{\theta: \theta_{jj} \in \mathbb{R}, \theta_{jk} = \theta_{kj} \leq 0 \text{ for } j<k\}$.
Given $X^{(i)} \overset{iid} {\sim}\text{PGM}(\theta)$, $ i = 1, \ldots, n$ and some prior $\pi(\theta)$, the objective is to sample from $\pi(\theta\mid X_1, \ldots, X_n)$ obtained using Bayes' theorem. To deal with the intractable term $\log z(\theta)$ in the posterior potential, we write $\nabla_\theta \log z(\theta) = \mathbb{E}_{p_\theta}[T(Y)]$, where $T(Y)$ has diagonal-coordinate entries $Y_j$ and off-diagonal-coordinate entries $2Y_jY_k$ ($j<k$), matching the parameterization in~\eqref{eq:pgm_mass}. Thus a stochastic oracle is constructed by considering sample averages of $T(Y)$ for observations generated from $p_\theta$. Writing $\theta^{p(p+1)/2} = \text{vech}(\theta)$ and $d = p(p+1)/2$, we put $p = 4$ and $5$ and sample $n = 500$ samples from the PGM with true $\theta^*$, a matrix with all diagonal entries equal to $1$ and all off-diagonal entries equal to $-0.1$ for each respective $p$. As an alternative to two-loop \ref{eq:FO}, we use the exchange algorithm \citep{murray2012mcmc}; details are provided in Section \ref{sec:PGM_details}. 
In the second through fourth panels of Figure \ref{fig:traceplot} the sample paths of the log-posterior averaged over 20 replicates and traces from one run for two coordinates of $\theta$, namely $\theta_{11}, \theta_{12}$ are shown for $p =4$. Corresponding results for $p = 5$ are reported in the Supplement.  
In particular, we see that by adjusting the step size after reaching the error floor for each inner loop, \ref{eq:FO} is able to begin with a large step size and mix fast. For the MH-based exchange algorithm only small step sizes result in  reasonable acceptance probabilities, but with slow mixing.

\section{Supplementary material}
The supplementary document contains proofs of all theorems and lemmas, further numerical experiment results, and results on maximum marginal likelihood estimation where latent variables are sampled using \ref{eq:FO}. Code and usage examples are available at \href{https://github.com/RuitingDeposit/sFO-MLA/}{https://github.com/RuitingDeposit/sFO-MLA/}.
%\clearpage
\bibliographystyle{plainnat}
\bibliography{ref_clean}
%%%%%%%%%%%%%%%%%%%%%%%%%%%%%%%%%%%%%%%%%%%%%%%%%%%%%%%%%%%%

\newpage
\onecolumn
\clearpage\pagebreak\newpage
\setcounter{secnumdepth}{3}
\setcounter{equation}{0}
\setcounter{page}{1}
\setcounter{table}{0}
\setcounter{section}{0}
\setcounter{subsection}{0}
\setcounter{figure}{0}
\renewcommand{\theequation}{S.\arabic{section}.\arabic{equation}}
\renewcommand{\thesection}{S.\arabic{section}}
\renewcommand{\thepage}{S.\arabic{page}}
\renewcommand{\thetable}{S.\arabic{table}}
\renewcommand{\thefigure}{S.\arabic{figure}}

\section*{\uppercase{Supplementary Material}}
\section{Proofs}\label{sec:supp_proofs}
\subsection{Proof of Theorem \ref{thm:MS-rec}}
\begin{proof}
Couple the exact-gradient and stochastic first-order chains using an
optimal coupling of their initial laws and common Gaussian
innovations. Define
\[
\Delta_t:=Y_t-\widehat Y_t,
\qquad
G_t:=g(Y_t)-g(\widehat Y_t),
\]
\[
\Xi_t
:=
\widehat g(\widehat Y_t,\zeta_{t+1})
-
g(\widehat Y_t),
\qquad
D_t
:=
A_\varphi(Y_t)-A_\varphi(\widehat Y_t).
\]
Then
\[
\Delta_{t+1}
=
\Delta_t-\eta G_t+\eta\Xi_t
+
\sqrt{2\eta}\,D_tN_{t+1}.
\]
Conditioning on $\mathcal F_t$, using $\E\!\left[ \widehat g(\widehat Y_t,\zeta_{t+1}) \,\middle|\, \mathcal F_t \right] = g(\widehat Y_t),$ and
conditional independence of the oracle and Gaussian innovations,
gives
\begin{align}
\E\!\left[
\|\Delta_{t+1}\|^2
\,\middle|\,
\mathcal F_t
\right]
={}&
\|\Delta_t-\eta G_t\|^2
+
\eta^2
\E\!\left[
\|\Xi_t\|^2
\,\middle|\,
\mathcal F_t
\right]
+
2\eta\|D_t\|_{\mathrm F}^2.
\end{align}
By Assumptions \ref{ass:A1} - \ref{ass:A4},
\begin{align}
\E\!\left[
\|\Delta_{t+1}\|^2
\,\middle|\,
\mathcal F_t
\right]
&\le
\left(
1-2m\eta+M^2\eta^2+2\alpha\eta
\right)
\|\Delta_t\|^2
+
\sigma^2d\eta^2
\nonumber\\
&=
\left(
1-2\gamma\eta+M^2\eta^2
\right)
\|\Delta_t\|^2
+
\sigma^2d\eta^2.
\end{align}
Therefore, if $\eta\le\gamma/M^2$,
\[
\E\!\left[
\|\Delta_{t+1}\|^2
\,\middle|\,
\mathcal F_t
\right]
\le
(1-\gamma\eta)\|\Delta_t\|^2
+
\sigma^2d\eta^2.
\]
Iteration yields
\[
\E\|\Delta_t\|^2
\le
(1-\gamma\eta)^t\E\|\Delta_0\|^2
+
\frac{\sigma^2d}{\gamma}\eta.
\]
Consequently,
\begin{align}
\Wtwo\!\left(
\widehat\pi^*_{t,\eta,\varphi},
\pi^*_{t,\eta,\varphi}
\right)
\le
e^{-\frac{\gamma}{2}\eta t}
\Wtwo\!\left(
\widehat\pi^*_{0,\eta,\varphi},
\pi^*_{0,\eta,\varphi}
\right)
+
\sigma\sqrt{\frac{d\eta}{\gamma}}.
\label{eq:FO-perturbation}
\end{align}
Since the two chains have the same initial law, the transient term
in~\eqref{eq:FO-perturbation} vanishes. Hence
\[
\Wtwo\!\left(
\widehat\pi^*_{t,\eta,\varphi},
\pi^*_{t,\eta,\varphi}
\right)
\le
\sigma\sqrt{\frac{d\eta}{\gamma}}.
\]

The triangle inequality now gives
\[
\Wtwo\!\left(
\widehat\pi^*_{t,\eta,\varphi},
\pi_\varphi^*
\right)
\le
\Wtwo\!\left(
\widehat\pi^*_{t,\eta,\varphi},
\pi^*_{t,\eta,\varphi}
\right)
+
\Wtwo\!\left(
\pi^*_{t,\eta,\varphi},
\pi_\varphi^*
\right).
\]
Applying the exact-gradient MLA bound used in \cite{li2022mirror} to the second term, and using
\[
\pi^*_{0,\eta,\varphi}
=
\widehat\pi^*_{0,\eta,\varphi},
\]
proves~\eqref{eq:main-W2-generic-fo}.
\end{proof}
\subsection{Proof of Lemma \ref{lem:fixedstepopt}}
For the first assertion, the definition of $T^*(\eta)$ ensures that, for every $t\ge T^*(\eta)$, we have $A_0e^{-\gamma\eta t} \le B_0\eta^{1/2}.$
Substitution into~\eqref{gaja} gives
$w(t,\eta) \le 2B_0 \eta^{1/2}.$ For the second assertion, fix $T$ and differentiate:
$\frac{\partial}{\partial\eta}w(T,\eta)
=
-\gamma T A_0e^{-\gamma T\eta}
+
(1/2) B_0\eta^{-1/2}.$
Thus any interior stationary point satisfies
$\gamma T A_0e^{-\gamma T\eta}
=
(1/2) B_0\eta^{-1/2}.$
Setting $c:=\gamma T$
and rearranging gives $$\eta e^{-2c\eta} = \dfrac{B_0^2}{4c^2 A_0^2} \iff (-2c\eta) e^{-2c\eta} = -\dfrac{B_0^2}{2cA_0^2}. $$
Using the Lambert $W$ function \citep{2020ebe} we thus obtain
$$\eta = -\dfrac{1}{2\gamma T} W_k\left( -\dfrac{B_0^2}{2\gamma T A_0^2}\right), \quad k\in \{0, -1\}.$$
For all sufficiently large \(T\), the argument of the Lambert $W$ function belongs to $(-e^{-1},0)$ , yielding two real stationary points. The smaller stationary point, corresponding to the principal branch $W_0$, is a local maximum, whereas the larger stationary point,
$$\eta^*(T) = -\frac{1}{2\gamma T}W_{-1}\left(-\frac{B_0^2}{2\gamma T A_0^2}\right),$$
is a local minimum. Moreover, $\eta^*(T)=O((\log T)/T)\to0$, so for any fixed $\bar\eta>0$, $\eta^*(T)\in(0,\bar\eta]$  for all sufficiently large $T$, and $\eta^*(T)$  is eventually the global minimizer on $(0,\bar\eta]$.
%For all sufficiently large $T$, the argument on the right belongs to $(-e^{-1},0)$, and the larger stationary point, corresponding
%to the $W_{-1}$ branch, lies in $(0,\bar\eta]$ and is the minimizer.
This yields the stated expression for $\eta^*(T)$. Finally, using the standard expansion of $W_{-1}(x)$ as
$x\uparrow0$ gives $\eta^*(T)
=
\frac{1}{2\gamma}\frac{\log T}{T}
+
O\!\left(
\frac{\log\log T}{T}
\right).$
Substitution into~\eqref{gaja}, together with the first-order
condition above, shows that the transient term is of smaller order
than $B_0(\eta^*(T))^{1/2}$, and hence $w\!\left(
T,\eta^*(T)
\right)
\sim
B_0
\left(
\frac{1}{2\gamma}
\right)^{1/2}
\left(
\frac{\log T}{T}
\right)^{1/2}.$

\subsection{Proof of Theorem \ref{thm:two-loop-short}}
Since $\eta_k\ell_k\geq c_0$,~\eqref{eq:epoch-generic} gives $D_{k+1}
\leq
aD_k+B_0\eta_0^{1/2}\beta^{k/2}, a:=e^{-\gamma c_0}.$
Iterating over $k$ yields $D_k
\leq
a^kD_0+
B_0\eta_0^{1/2}
\sum_{j=0}^{k-1}a^{k-1-j}\beta^{j/2}.$
Equivalently, $D_k
\leq
a^kD_0+
B_0\eta_0^{1/2}\beta^{(k-1)/2}
\sum_{j=0}^{k-1}q_{1/2}^{\,j}, q_{1/2}=a\beta^{-1/2}.$
The three cases in~\eqref{eq:two-loop-rates} now follow from the
geometric sum and the fact that $T_k\asymp\beta^{-k}$.
\subsection{Proof of Theorem \ref{thm:twoloop-poly}}
By Assumptions~\ref{ass:A5} and~\ref{ass:fixedstep-bias},
the triangle inequality gives, for any $\mu\in\mathcal P_2(\mathbb R^d)$, $W_2(\mu P_\eta^t,\pi^*)
\leq
C_{\rm mix}(1+\eta t)^{-q}W_2(\mu,\pi^*)
+
B_r\eta^r,$
where $B_r=(1+C_{\rm mix})C_{\rm bias}$.
Applying this bound over epoch $k$ and using
$\eta_k\ell_k\geq c_0$ yields $D_{k+1}\leq \rho D_k+B_r\eta_0^r\beta^{rk}, \rho=C_{\rm mix}(1+c_0)^{-q}.$
Iteration gives $D_k
\leq
\rho^kD_0+
B_r\eta_0^r
\beta^{r(k-1)}
\sum_{j=0}^{k-1}Q_r^j,Q_r=\rho\beta^{-r}.$
The three regimes follow from the geometric sum and
$T_k\asymp\beta^{-k}$.
\section{Details of numerical experiments}
\subsection{Mirror maps}\label{sec:mirror_maps}
In both examples of Section \ref{sec:numerical_experiments}, the constrained space is defined by linear boundaries. For the mixture model, we have $\theta_i \geq 0$ for $i = 1,2, \dots, K$ and $\sum_{i=1}^K \theta_i = 1$. Since only the first $K-1$ mixture coefficients are free, we can reformulate the constraints as $\theta_i \geq 0$ for $i = 1,2,\dots, K-1$ and $\sum_{i = 1}^{K-1} \theta_i \leq 1$. It can be summarized as 
$$\begin{pmatrix}
    -I_{K-1} \\
    \mathbf{1}^\top_{K-1}
\end{pmatrix} \theta \leq \begin{pmatrix}
    \mathbf{0}_{K-1}\\
    1
\end{pmatrix},$$
where $I_{K-1}$ is the $(K-1)$-dimensional identity matrix, $\mathbf{1}_{K-1}$ is a column vector of length $K-1$ containing just ones, and similarly $\mathbf{0}_{K-1}$ contains just zeros. 

For PGM, the parameter $\theta$ matrix is a symmetric matrix, the parameter is $\mathrm{vech}(\theta) = (\theta_{11}, \theta_{12}, \dots, \theta_{1p}, \theta_{22}, \dots, \theta_{pp}) \in \mathbb{R}^{p(p+1)/2}$, where the indices are arranged in the lexicographic order. Clearly $\mathrm{vech}(\mathbf{1}_p\mathbf{1}_p^\top - I_p) \in \mathbb{R}^{p(p+1)/2}$ is a vector containing zeros and ones such that ones only appear where the index corresponds to a non-diagonal entry of $\theta$. Hence, the constraints can be expressed as $\mathrm{diag}(\mathrm{vech}(\mathbf{1}_p \mathbf{1}_p^\top - I_p)) \mathrm{vech}(\theta) \leq \mathrm{vech}(I_p)$. Therefore, if we denote $\mathrm{vech}(\theta)$ by $\theta$ by an abuse of notations, in both examples we have a constraint of the form $A\theta \leq b$, where $A,b$ are constants that do not depend on $\theta$.

For both scenarios, we have chosen the mirror map to be a regularized log barrier map 
\begin{equation}\label{eq:log_barrier}
\varphi(x) = - \sum_{j=1}^{m_A} \log(b_j-a_j^\top x) + \frac{\lambda}{2} \|x\|_2^2,
\end{equation}
where $a_j^\top$ is row $j$ of $A$, $m_A$ is its number of rows, $A$ and $b$ have been discussed above, and $\lambda$ is a regularizer fixed at $1$ in all experiments. The regularizer is added to ensure that $\nabla^2 \varphi$ is always well-behaved. It follows that $$\nabla \varphi(x) =  -A^\top \frac{1}{Ax - b} + \lambda x, $$ 
in which $1/(Ax - b)$ is a vector obtained by the obvious element-wise division. For both examples of Section \ref{sec:numerical_experiments}, $\nabla \varphi$ is a bijection between the interior of the constrained space and $\mathbb{R}^d$, where $d$ denotes the respective dimension of the problems. It is equal to $K-1$ for the mixture model, and $p(p+1)/2$ for the PGM. 

Next, we present how we compute the inverse $(\nabla \varphi)^{-1} = \nabla \varphi^*$. In the mixture model, the equation $\nabla\varphi(x) = y$ expands into 
\begin{align*}
&\nabla\varphi(x) = y \\
&\implies x_i - \frac{1}{x_i} + \frac{1}{1-\sum_{j=1}^{K-1}x_j} = y_i, \quad i = 1,2,\dots, K-1
\end{align*}
The equations don't have a closed form solution, so we solve it numerically. Set $t = \frac{1}{1 - \sum_{j= 1}^{K-1}x_j} $. Treating $t$ as a constant, we can solve the equation $$x_i - \frac{1}{x_i} = y_i - t$$ 
as it translates to $x_i^2 - (y_i - t)x_i - 1 = 0$. Since each $x_i \geq 0$ due to the constraint, the solution is 
$$x_i(t) = \frac{y_i - t + \sqrt{(y_i - t)^2 + 4}}{2},$$ 
and then by the definition of $t$, we must have $1-\sum_{j=1}^{K-1} x_i(t) = \frac{1}{t}$. Let $F(t) = \sum_{j=1}^{K-1} x_i(t) + \frac{1}{t} - 1$. Observe that $t > 0$ in the interior of our constraint, and 
$$x_i'(t) = -\frac{1}{2}\left(1- \frac{t-y_i}{\sqrt{(t-y_i)^2 + 4}}\right) < 0,$$ 
so $F(t)$ is a strictly decreasing function on $(0,\infty)$. In addition, $\lim_{t\to+\infty} F(t) = -1$ and $\lim_{t\to0+}  F(t) = +\infty$, so we are able to find a unique solution to $F(t) = 0$ using the bisection method, which in turn gives us the solution $x_i(t)$'s.

In PGM, let $\nabla \varphi (x) = y$ for an arbitrary $y \in \mathbb{R}^{p(p+1)/2}$. Using $\{(i,j): 1\leq i \leq j \leq p\}$ in the lexicographic order to index the entries of this vector, we have $x_{ii} = y_{ii}$ for $i = 1,2,\dots, p$, and 
$$-\frac{1}{x_{ij}} + x_{ij} = y_{ij}, \quad i \neq j.$$  
So we only need to solve for $x_{ij}$ when $i\neq j$. The $x_{ij}$'s satisfy the quadratic equation $x_{ij}^2 - y_{ij}x_{ij} - 1 = 0$, which has solutions $\frac{y_{ij} \pm \sqrt{y_{ij}^2 + 4}}{2}$, and the constraint $x_{ij} \leq 0$ tells us it must equal $\frac{y_{ij} - \sqrt{y_{ij}^2 + 4}}{2}$. 

%\subsection{Additional details}
\subsection{Experimental details of Section \ref{sec:simplex}}\label{sec:simplex_details}

\begin{figure*}[htb]
    \centering
    \begin{minipage}[t]{0.23\textwidth}
        \centering
        \includegraphics[width=\linewidth, height = 4cm]{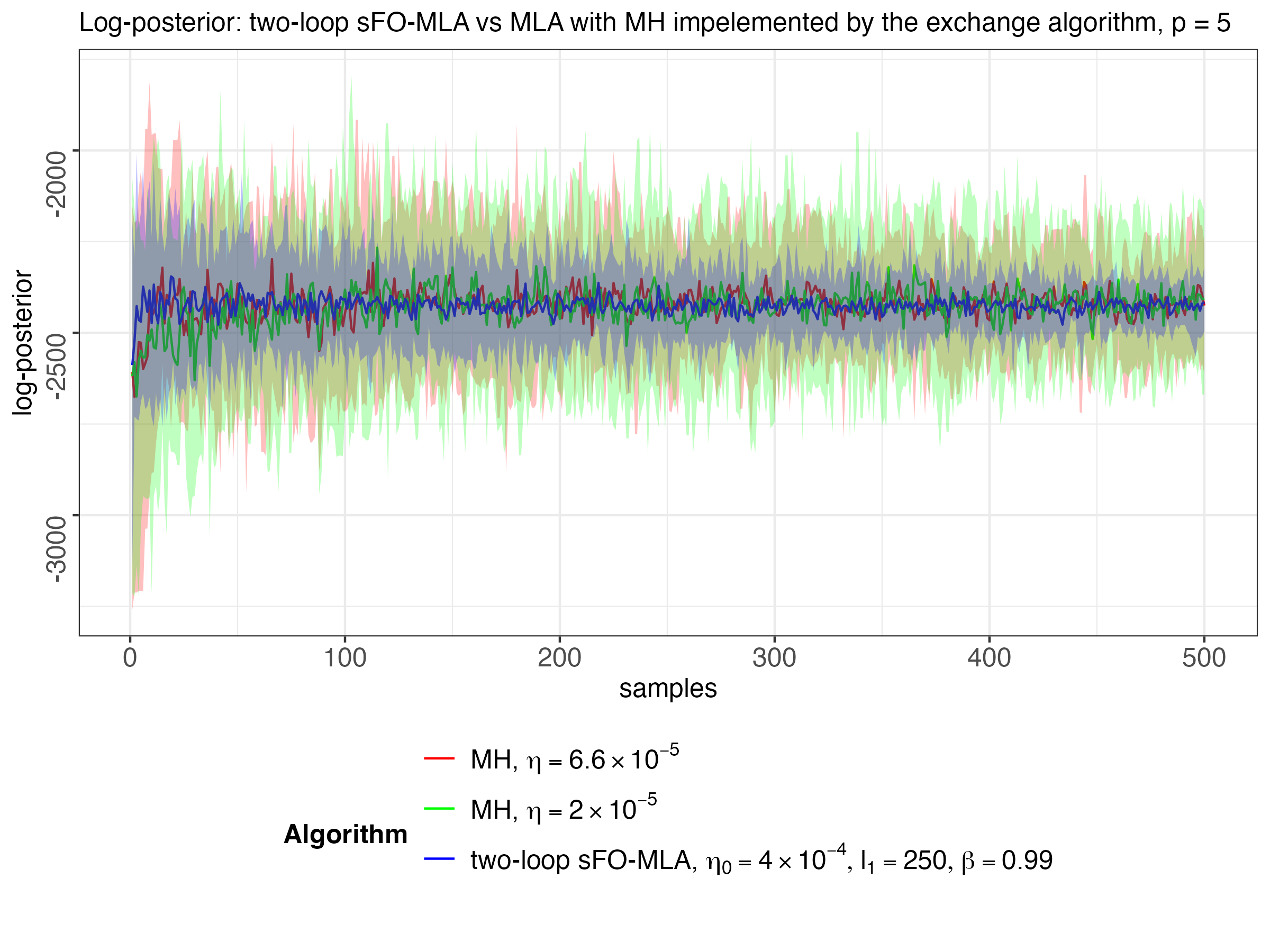}
    \end{minipage}
    \begin{minipage}[t]{0.23\textwidth}
        \centering
        \includegraphics[width=\linewidth, height = 4cm]{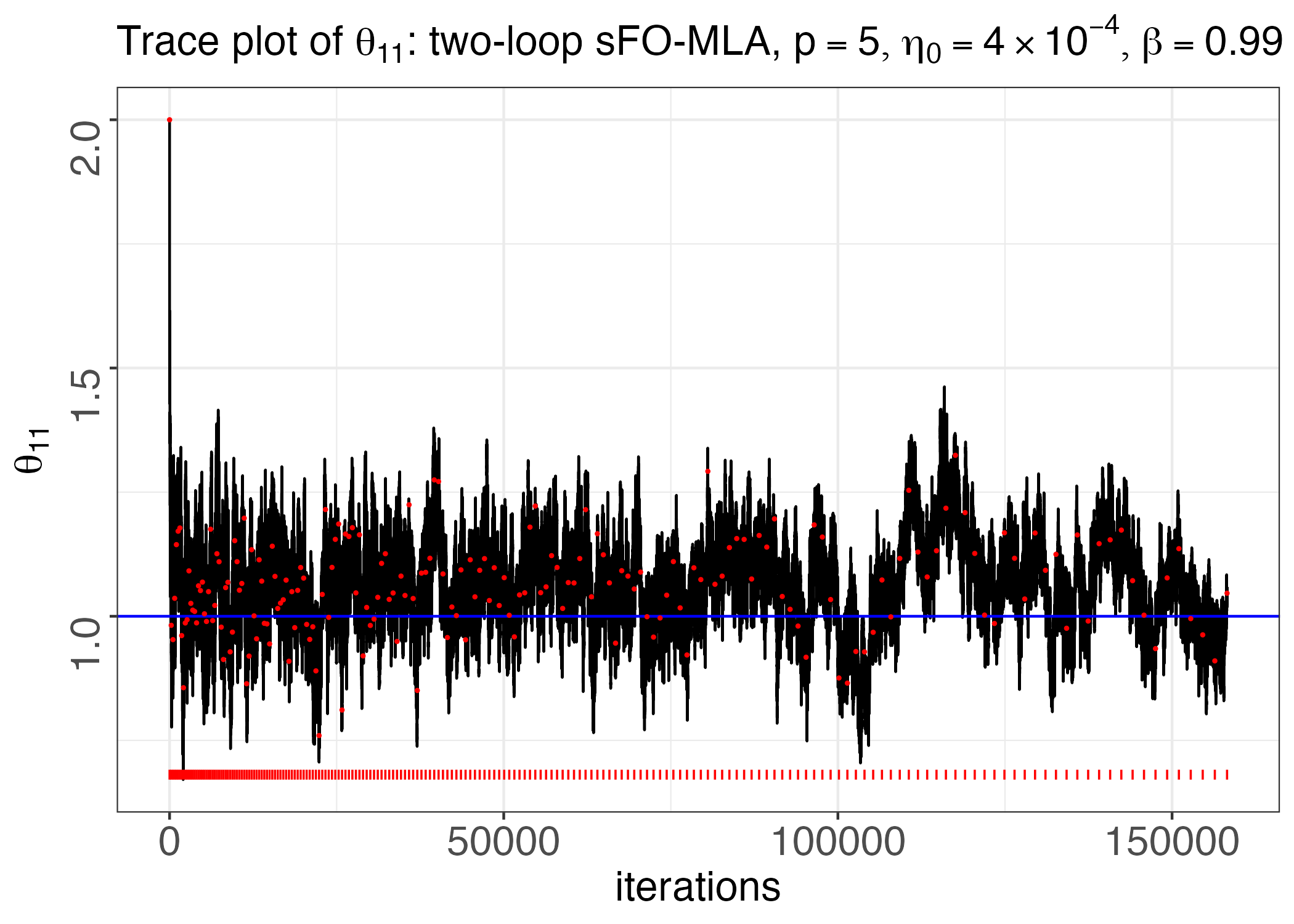}
    \end{minipage}
    \begin{minipage}[t]{0.23\textwidth}
        \centering
        \includegraphics[width=\linewidth, height = 4cm]{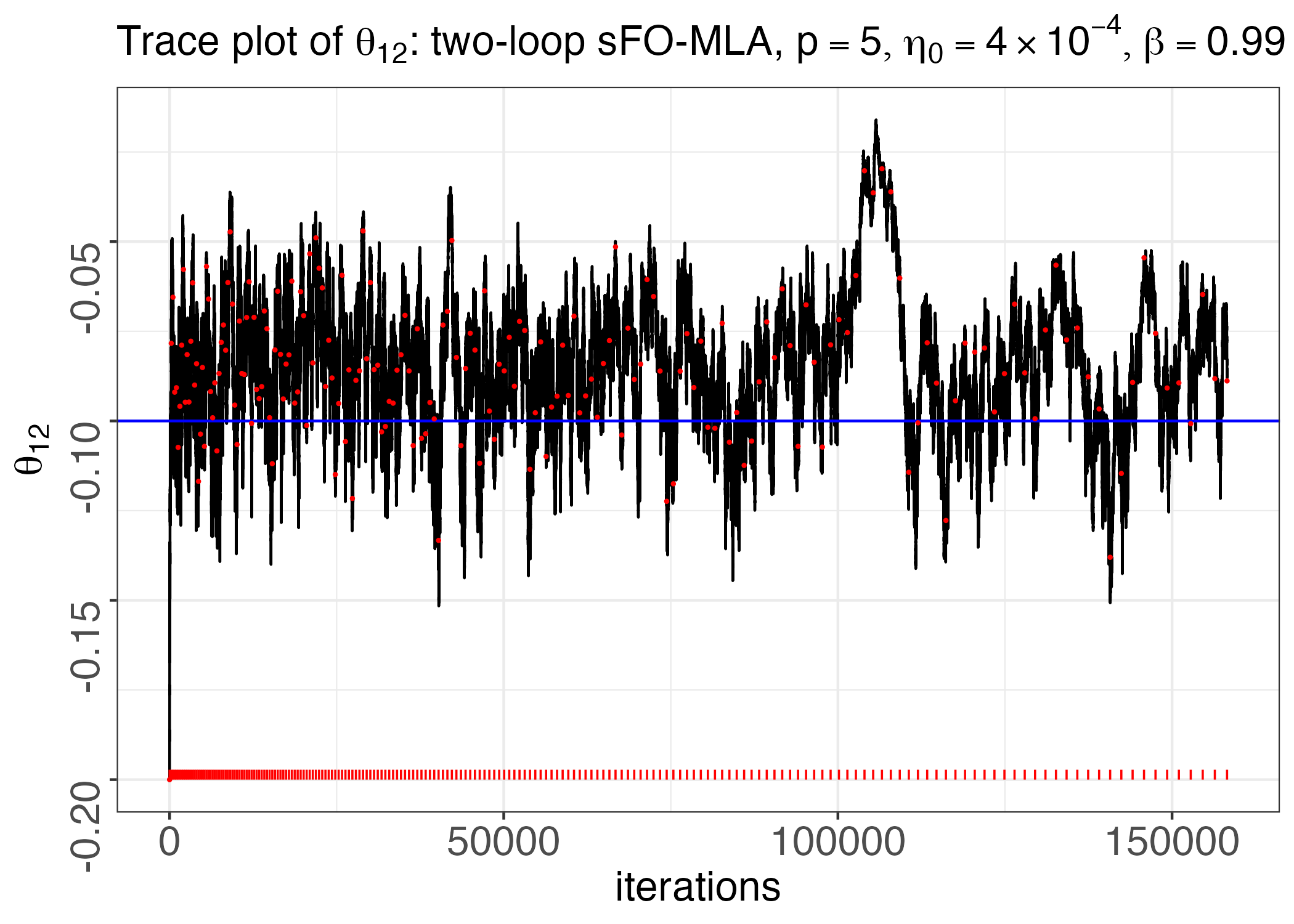}
    \end{minipage}
    \caption{Left to right: estimated log-posterior traces for two-loop sFO-MLA and the exchange algorithm in the PGM experiment with $p = 5$ and $90\%$ confidence bands; two-loop sFO-MLA traces of $\theta_{11}$ and $\theta_{12}$ for the PGM with $p = 5$.}
    \label{fig:supplement_figure}
\end{figure*}

Write $\widetilde f(u)=f(\Theta(u))$ and
$h_i(u)=\sum_{r=1}^K\theta_r(u)q_r(Z_i)$.
The reduced-coordinate gradient is, for $k=1,\ldots,K-1$,
\[
\partial_{u_k}\widetilde f(u)
=-\sum_{i=1}^n\frac{q_k(Z_i)-q_K(Z_i)}{h_i(u)}
-\frac{\alpha_k-1}{u_k}+\frac{\alpha_K-1}{\theta_K}.
\]
When $n$ is large, choose a uniform minibatch $B$ of size $m$ from
$\{1,\ldots,n\}$. An unbiased stochastic gradient is
\[
\widehat{\partial_{u_k}\widetilde f}(u)
=-\frac{n}{m}\sum_{i\in B}\frac{q_k(Z_i)-q_K(Z_i)}{h_i(u)}
-\frac{\alpha_k-1}{u_k}+\frac{\alpha_K-1}{\theta_K}.
\]
This estimate is used in the updates of two-loop sFO-MLA~\eqref{eq:FO}.
The observed data were generated by setting $\theta_1 = \ldots = \theta_4 = 1/(2K)$, $\theta_{K-3} = \ldots = \theta_{K} = 3/(2K)$ and the rest were set to $1/K$. All Dirichlet prior parameters $\alpha_1 = \ldots = \alpha_K = 2$. We consider $K \in \{30, 50, 80, 100\}$ and for each $K$, we sample $n = 10000$ independent $Z_i$'s from the distribution $p_\theta(z)$. In the two-loop sFO-MLA, for all $K$'s considered, we set the minibatch size $m = 2500$, the geometric decay parameter $\beta = 0.96$, the initial inner loop size $\ell_0 = 15$, the initial sample $(\frac{1}{K}, \frac{1}{K}, \dots, \frac{1}{K})$, and we let the algorithm run for $M = 135$ outer loop samples, including the initial sample, producing a total of $T_{M-1} + 1= 85203$ samples. \ref{eq:mla-short} and MH corrected MLA were also run for $T_{M-1} + 1$ iterations. Different initial values of $\eta_0$ are taken for different $K$'s. They are $3\times 10^{-3}$, $5\times 10^{-3}$, $7\times 10^{-3}$, and $8\times 10^{-3}$ for $K=30,50,80,100$, respectively. The three selected step-sizes for \ref{eq:mla-short} are $ 10^{-6}, 10^{-5}$, and $\eta_0$, which is defined differently above for each dimension; for the MH corrected version we let $\eta = 10^{-3}$.
For each $K$, we start at $(\frac{1}{K},\frac{1}{K},\dots, \frac{1}{K})$.

% AUTHOR CHECK: add PGM prior, initialization, oracle sampling method,
% step-size/epoch settings, runtime hardware, and Monte Carlo settings.
\subsection{Experimental details of Section \ref{sec:PGM}}\label{sec:PGM_details}
Suppose $X = (X_1, \ldots, X_p)$ is a random vector such that each $X_j$ has support $\mathcal{S} = \{0, 1, 2, \ldots\}$. Then given a graph $G = (V, E)$, where $V$ is the set of $p$ nodes and $E$ is the set of edges, the PGM has mass function
\begin{equation}\label{eq:pgm_mass}
    p(\mathbf{x};\theta) = \frac{1}{z(\theta)} \exp\left\lbrace \sum_{j \in V} \theta_j \mathbf{x}_j - \log (\mathbf{x}_j !) + \sum_{(j,k) \in E} \theta_{jk} \mathbf{x}_j \mathbf{x}_k\right\rbrace = \frac{E_\theta(\mathbf{x})}{z(\theta)}.
\end{equation}
Consider independent and identically distributed observations $X^{(i)} \sim \text{PGM}(\theta)$, $ i = 1, \ldots, n$ and for Bayesian inference a prior $\pi(\theta)$ is elicited on $\Omega$. Then the posterior distribution is 
\begin{align}\label{eq:PGM_posterior}
    \pi(\theta \mid X^{(1)}, \ldots, X^{(n)}) &\propto \left[\prod_{i=1}^n  \frac{E_\theta(\mathbf{x}^{(i)})}{z(\theta)}\right] \pi(\theta)  \nonumber \\
    &\propto e^{-f(\theta)}, \quad \theta \in \Omega
\end{align}
where $f(\theta) = -\sum_{i=1}^n \log E_\theta(\mathbf{x}^{(i)}) + n \log z(\theta) - \log \pi(\theta)$. In a traditional MCMC algorithm, implementing the MH correction for this posterior distribution requires evaluating the potential at the proposed state and the current state, both of which involve the intractable normalizing constant $z(\theta)$.

The exchange algorithm proposed by \cite{murray2012mcmc} targets posteriors of this form. Suppose one uses a mirror map $\varphi(\theta)$ to map the parameter space onto the unconstrained space $\mathbb{R}^{p(p+1)/2}$ and forms stochastic proposals using the same unbiased estimator of $\nabla_\theta \log z(\theta)$ as in Section \ref{sec:PGM}. Then at each time step, a candidate $\theta$ is proposed which has density as in Remark \ref{rem:proposal_tractability}.
The exchange algorithm defines a transition kernel for $(\theta, \hat{g})$ where $\hat{g} = \hat{g}(\nabla \varphi(\theta))$. Define the state of the chain at time $t$ to be $\theta_t$ and the stochastic estimate of the gradient is $\hat{g}(\nabla \varphi(\theta_t))$. 
This algorithm uses the same proposal mechanism at every iteration as defined in \eqref{eq:FO} but this new proposal (say $(\theta^\star, \hat{g}(\nabla \varphi(\theta^\star)))$) is accepted with probability 
$$\alpha\{(\theta_t, \hat{g}), ( \theta^\star, \hat{g}^\star )\} = \min\left\lbrace 1, \dfrac{\pi(\theta^\star\mid X^{(1)}, \ldots, X^{(n)})\hat{q}(\theta_{t}\mid \theta^\star, \hat{g}(\nabla \varphi(\theta^\star)))}{\pi(\theta_{t}\mid X^{(1)}, \ldots, X^{(n)}) \hat{q}(\theta^\star \mid \theta_{t}, \hat{g}(\nabla \varphi(\theta_t)))}\right\rbrace,$$
where $\hat{q}(\cdot\mid \theta_t, \hat{g}(\nabla \varphi(\theta_t)))$ is the conditional distribution of proposed state given the current state $\theta_t$ and a stochastic estimate $\hat{g}(\nabla \varphi(\theta_t)) $ of the gradient. 
%It is related to the proposal distribution \eqref{eq:mld_proposal} by the relation $q(\theta^\star \mid \theta_t) = \int \hat{q}(\theta^\star \mid \theta_t, \hat{g}(\nabla \phi(\theta_t))) p(\hat{g}(\theta_t) \mid \theta_t) d\hat{g}$ so that it is the Gaussian density multiplied by the Jacobian term in the integral in \eqref{eq:mld_proposal}. Also, $\pi(\theta\mid X^{(1)}, \ldots, X^{(n)})$ is the posterior distribution given in \eqref{eq:PGM_posterior}. 
Since the acceptance probability involves the intractable $z(\theta_t)^{n}/z(\theta^\star)^n$, we instead compute the exchange ratio \cite{murray2012mcmc}
\begin{equation}\label{eq:exchange_ratio}
    \alpha_{EX}\{(\theta_t, \hat{g}), ( \theta^\star,  \hat{g}^\star)\} = \min\left\lbrace 1, \left[\prod_{i=1}^n\dfrac{E_{\theta^\star}(X^{(i)})E_{\theta_t}(Y^{(i)})}{E_{\theta_{t}}(X^{(i)})E_{\theta^\star}(Y^{(i)})}\right] \dfrac{\pi(\theta^\star)}{ \pi(\theta_{t})} \dfrac{q(\theta_t\mid \theta^\star, \hat{g}^\star)}{q(\theta^\star\mid \theta_t, \hat{g})}\right\rbrace, 
\end{equation}
with $Y^{(i)} \overset{iid}{\sim} p_{\theta^\star}(\cdot)$.
% AUTHOR CHECK: exact invariance requires exact auxiliary draws and a
% specified augmented proposal for the retained stochastic gradient.
 The resulting Markov chain with this Metropolis-Hastings correction is a chain defined on an augmented space such that the marginal distribution over $\theta$ is the target posterior distribution $\pi(\theta \mid X^{(1)}, \ldots, X^{(n)})$. In practice, we cannot sample $Y^{(i)}$'s exactly from $p_{\theta^*}(\cdot)$, so in the experiments we use a Gibbs sampler that samples iteratively from each node-conditional distribution; we use 200 Gibbs iterations. We note here that ideally one could use the perfect sampling idea from \cite{propp1996exact} but scaling this to high-dimensions is challenging and has been reported elsewhere in the literature \citep{chen2024likelihood}.
 Therefore, the exchange algorithm is also approximated, and this again demonstrates the difficulty of making the MH step work for PGM. As for the log-posterior in Figure~\ref{fig:traceplot}
 and \ref{fig:supplement_figure}, they are also estimated because $z(\theta)$ is intractable. At each iteration, we estimate $z(\theta)$ using 5000 importance samples $Y_1, \ldots, Y_{5000}$ from $p_{\mathrm{diag}(\theta)}$, and then we compute the log-posterior by the estimate $\log \hat{z}(\theta)$ where $$\hat{z}(\theta) = z(\text{diag}(\theta))\frac{1}{5000} \sum_{i=1}^{5000} \dfrac{E_\theta(Y_i)}{E_{\mathrm{\text{diag}(\theta)}}(Y_i)},$$ 
 where $z(\text{diag}(\theta))$ is analytically available since it is the normalizing constant of the distribution of $p$ independent Poisson random variables.
 This is an unbiased estimator of $z(\theta)$; see \cite{geyer1991markov}. 
 The log-posterior traces in the plots are obtained by averaging 20 such traces.  %Indeed, using elementary calculations it can be shown that this chain has $\pi(\theta \mid X^{(1), \ldots, X^{(n)}}) p(\hat{g}(\nabla \varphi(\theta)) \mid \theta)$ as its stationary distribution. This algorithm can be implemented in both the zeroth and first order oracle scenarios. In the zeroth order case, $\hat{g}$ is either the forward-difference or centered-difference estimators of \eqref{eq:FDCD}, and in the first-order case $\hat{g}$ is simply $\nabla F$.

\section{Example: Maximum Marginal Likelihood Estimation}\label{sec:mmle}
The practical usefulness of Lemma \ref{lem:fixedstepopt} is clearly illustrated by considering the problem of maximum marginal likelihood estimation. Suppose we observe independent and identically distributed (iid) data $Y_i \sim p_\theta, \, i= 1, \ldots, n$ where $p_\theta(y) = \int_{z} f_\theta(y, z) dz$. Here, $z$ is a latent variable. This situation arises frequently in statistics and machine learning, e.g. hidden Markov models, missing data problems, Boltzmann machines etc. are among a few. To obtain the maximum likelihood estimate in this case, we iterate under standard regularity conditions
\begin{equation}\label{eq:MMLE}
    \theta_{t+1} = \theta_t + \gamma_t \sum_{i=1}^n \nabla_\theta \log p_{\theta_t}(Y_i),
\end{equation}
where $\gamma_t>0$ is a sequence of step-sizes. Suppose we are interested in the case when $p_\theta$ and (hence) $\nabla \log p_\theta$ do not admit a tractable form, but the joint distribution of the $(y,z)$ is tractable. Under such a setup, we have from Fisher's identity~\citep{cappe2005inference} that $\nabla \log p_\theta(Y) = \E_{p_\theta(z\mid y)}[\nabla \log f_\theta(y, z)]$ and replace the total observed-data score in \eqref{eq:MMLE} by $\E_{p_\theta(\mathbf{Z}\mid \mathbf{Y})}[\nabla \log f_\theta(\mathbf{Y}, \mathbf{Z})]$. Here, $\mathbf{Y} = (Y_1, \ldots, Y_n)$ and $\mathbf{Z} = (Z_1, \ldots, Z_n)$, and we use this notation to indicate that the latent variables may or may not be conditionally independent given the observed data. Although Fisher's identity provides an attractive alternative, in most practical applications, either $p_\theta(\mathbf{Z} \mid \mathbf{Y})$ is not standard or the expectation of the score of the complete data likelihood is not available or both. In such cases, it is standard to consider a Monte Carlo approximation of $\E_{p_\theta(\mathbf{Z}\mid \mathbf{Y})}[\nabla \log f_\theta(\mathbf{Y}, \mathbf{Z})]$. Writing $S^{\mathbf{Y}}_\theta(\mathbf Z)=\nabla_\theta\log f_\theta(\mathbf Y,\mathbf Z)$ for the total complete-data score, one then has the (stochastic) iterates
\begin{equation}\label{eq:sMMLE}
    \theta_{t+1} = \theta_t + \gamma_t S^{\mathbf{Y}}_{\theta_{t}}(\mathbf{Z}_{t+1}), \quad \mathbf{Z}_{t+1} \sim p_{\theta_t}(\mathbf{Z} \mid \mathbf{Y}),
\end{equation}
which, under suitable regularity and stability conditions, converges to a stationary point of the observed likelihood when the step-sizes satisfy $\sum \gamma_t = \infty$ and $\sum \gamma_t^2 < \infty$ \citep{robbins1951stochastic}. This guarantee critically assumes our ability to generate $\mathbf Z_{t+1} \sim p_{\theta_t}(\cdot \mid \mathbf Y)$ exactly which is rarely the case. In fact, often this distribution is constrained and $-\log p_\theta(z \mid y)$ is available only through a first-order oracle. Classical examples are multivariate probit models with random effects, probit ordinal models, hierarchical models etc. Next, we imagine a scenario where sFO-MLA was run with step-size $\eta_t$ to generate $Z_{t+1}$ and hence $S_{\theta_t}^{\mathbf Y}(\mathbf Z_{t+1})$ is no longer an unbiased estimate of the complete-data score. The following theorem identifies an exact relation between the optimization step-sizes $\gamma_t$, sFO-MLA step-sizes $\eta_t$ to attain convergence of $\theta_t$. For simplicity let us assume that $\theta^\star = \argmax_\theta \ell(\theta) = \argmax_\theta \sum_{i=1}^n \log p_\theta(Y_i) $ exists and is unique.
\begin{theorem}\label{thm:mmle_convergence}
    Suppose there exists $m>0$ such that $\langle \theta - \theta^\star, \nabla \ell(\theta) \rangle \leq - m \norm{\theta - \theta^\star}^2$ for all $\theta$ and the complete-data score $S_\theta^{\mathbf Y}(z) = \nabla_\theta \log f_\theta(\mathbf Y, z)$ is $L_S$-Lipschitz in $z$ uniformly over $\theta$, i.e. $\norm{S_\theta^{\mathbf Y}(z) - S_\theta^{\mathbf Y}(z')}\leq L_S \norm{z - z'}$. Next, at step $t$ of \eqref{eq:sMMLE} consider sampling from $p_{\theta_t}(\mathbf{Z}\mid \mathbf{Y})$ using \eqref{eq:FO} with step-size $\eta_t$ for $T^*(\eta_t)$ iterations where $T^*(\eta)$ is as prescribed in part 1 of Lemma~\ref{lem:fixedstepopt}. Assume its fixed-step hypotheses hold uniformly over $\theta_t$, with a uniform bound on the initial dual Wasserstein error and a common $B_0$, and that the mirror maps are uniformly $m_\varphi$-strongly convex for some $m_\varphi>0$. Set $B_{\mathrm{pr}}=B_0/m_\varphi$. Let $\mathbf{Z}_{t+1}$ be the sample obtained at the  $T^*(\eta_t)$-th transition of running sFO-MLA. Define
    $$\xi_{t+1} \coloneqq S_{\theta_t}^{\mathbf{Y}}(\mathbf{Z}_{t+1}) - \E[S_{\theta_t}^{\mathbf{Y}}(\mathbf{Z}_{t+1})\mid \mathcal{F}_t],$$
    where $\mathcal{F}_t$ is the natural filtration and $\E[\xi_{t+1}\mid \mathcal{F}_t] = 0$. Assume $\E[\norm{S_{\theta_t}^{\mathbf{Y}}(\mathbf{Z}_{t+1})}^2\mid \mathcal{F}_t] \leq C(1 + \norm{\theta^\star - \theta_t}^2)$ for some $C>0$. If 
    $$\sum \gamma_t = \infty, \,\, \sum \gamma_t^2 < \infty, \, \, \sum \gamma_t \eta_t < \infty,$$
    then $\theta_t \to \theta^\star$ almost surely. 
\end{theorem}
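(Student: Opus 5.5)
We treat the recursion~\eqref{eq:sMMLE} run with approximate samples as a Robbins--Monro scheme whose noisy increment splits into three parts: the true gradient, a martingale difference, and a bias. Write
\[
S_{\theta_t}^{\mathbf Y}(\mathbf Z_{t+1}) = \nabla\ell(\theta_t) + b_t + \xi_{t+1},
\qquad
b_t := \E[S_{\theta_t}^{\mathbf Y}(\mathbf Z_{t+1})\mid\mathcal F_t] - \nabla\ell(\theta_t).
\]
By Fisher's identity, $\nabla\ell(\theta_t)=\E_{Z\sim p_{\theta_t}(\cdot\mid\mathbf Y)}[S_{\theta_t}^{\mathbf Y}(Z)]$. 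The first task is to bound the bias $b_t$ by a Wasserstein distance. Take any coupling of the law of $\mathbf Z_{t+1}$ given $\mathcal F_t$ with the exact conditional $p_{\theta_t}(\cdot\mid\mathbf Y)$. The $L_S$-Lipschitz property then gives
\[
\|b_t\| \le L_S\, W_1 \le L_S\, W_2\bigl(\mathcal L(\mathbf Z_{t+1}\mid\mathcal F_t),\,p_{\theta_t}(\cdot\mid\mathbf Y)\bigr).
\]
Next we control this $W_2$ distance. Lemma~\ref{lem:fixedstepopt}, part 1, applies conditionally on $\mathcal F_t$; this is legitimate because the fixed-step hypotheses, the initial-error bound and $B_0$ are uniform in $\theta_t$. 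After $T^*(\eta_t)$ steps it gives a dual error of at most $2B_0\sqrt{\eta_t}$. Uniform $m_\varphi$-strong convexity makes $\nabla\varphi^*$ $(1/m_\varphi)$-Lipschitz, so the primal error is at most $2B_{\mathrm{pr}}\sqrt{\eta_t}$. Hence $\|b_t\|\le 2L_SB_{\mathrm{pr}}\sqrt{\eta_t}$.

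The core step is a Lyapunov recursion for $V_t:=\|\theta_t-\theta^\star\|^2$. Expanding the square and taking conditional expectations gives
\[
\E[V_{t+1}\mid\mathcal F_t]
\le V_t + 2\gamma_t\langle\theta_t-\theta^\star,\nabla\ell(\theta_t)\rangle + 2\gamma_t\langle\theta_t-\theta^\star,b_t\rangle + \gamma_t^2 C(1+V_t).
\]
Strong concavity around $\theta^\star$ bounds the second term by $-2m\gamma_tV_t$. For the cross term we use Young's inequality:
\[
2\gamma_t\|\theta_t-\theta^\star\|\|b_t\| \le m\gamma_t V_t + \tfrac{1}{m}\gamma_t\|b_t\|^2 \le m\gamma_t V_t + \tfrac{4L_S^2B_{\mathrm{pr}}^2}{m}\gamma_t\eta_t.
\]
Combining these yields
\[
\E[V_{t+1}\mid\mathcal F_t] \le (1+C\gamma_t^2)V_t - m\gamma_tV_t + C\gamma_t^2 + c\,\gamma_t\eta_t.
\]
The square of $\sqrt{\eta_t}$ is exactly $\eta_t$, which is why the hypothesis $\sum\gamma_t\eta_t<\infty$ is the right summability condition.

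Finally we invoke the Robbins--Siegmund almost-supermartingale theorem. Its inputs are $\sum\gamma_t^2<\infty$ and $\sum(C\gamma_t^2+c\gamma_t\eta_t)<\infty$. It gives that $V_t$ converges almost surely to a finite limit and that $\sum\gamma_tV_t<\infty$ almost surely. Since $\sum\gamma_t=\infty$, the limit must be $0$, so $\theta_t\to\theta^\star$ almost surely.

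The main obstacle is making the bias step rigorous. Lemma~\ref{lem:fixedstepopt} is stated for deterministic initial laws, but here it is applied conditionally, with the target $p_{\theta_t}(\cdot\mid\mathbf Y)$ and the run length $T^*(\eta_t)$ depending on $\mathcal F_t$-measurable data. We handle this by viewing the inner chain, given $\mathcal F_t$, as a fresh fixed-step sFO-MLA run with fresh innovations. The uniform initial-error bound, call it $\bar A$, is then used in place of $A_0$ in $T^*$, so that the deterministic conclusion holds pathwise. We also need to check that the conditional-expectation definition of $b_t$ matches the coupling used, which is routine.
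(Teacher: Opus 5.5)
Your proposal is correct and matches the paper's proof step for step. It uses the same split into gradient, bias and martingale terms, and the same bound $\|b_t\|\le 2L_S B_{\mathrm{pr}}\sqrt{\eta_t}$ from Lemma~\ref{lem:fixedstepopt} with the $1/m_\varphi$-Lipschitz inverse mirror map; it then runs the same Young's-inequality recursion for $V_t$ and closes with Robbins--Siegmund and $\sum\gamma_t=\infty$, while your remark on applying the lemma conditionally on $\mathcal F_t$ (with the uniform initial-error bound in $T^*$) spells out a point the paper leaves implicit.
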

Theorem \ref{thm:mmle_convergence} provides practical guidance for the step-sizes $\gamma_t$ and $\eta_t$. For example, suppose $\gamma_t = \gamma_0/(t+1)^\alpha$ and $\eta_t = \eta_0/(t+1)^\beta$ for some $\alpha, \beta>0$, then for sequence \eqref{eq:sMMLE} to converge one needs $1/2< \alpha \leq 1$ and $\beta > 1- \alpha$.

Next we give a concrete example. Suppose we observe $y_i \in \{0, 1\}^d,  \, i = 1,\ldots, n$. To capture the correlation among the several binary outcomes it is common to assume that the data is generated by thresholding a latent Gaussian variable $z_i \in \mathbb{R}^d$, i.e. $y_{ij} = 1$ if and only if $z_{ij} \geq 0$. This is a generalization of the univariate probit model to the multivariate case \citep{ashford1970multi}. Thus, the correlation between $y_{ij}$ and $y_{ij'}$ can be studied through the correlation between $z_{ij}$ and $z_{ij'}$ where $j,j'\in\{1,\ldots,d\}$ and $j\neq j'$ . Suppose we assume that $z_i \sim \mathrm{N}(0,\Sigma)$. For identifiability, one typically assumes that $\Sigma$ is a correlation matrix, but that is not important for the ensuing development. Next, consider a scenario where $d$ is high and it is assumed that the dependence in $z$ is driven by $k\ll d$ unobserved latent factors $\zeta \in \mathbb{R}^{k \times 1}$ so that the residuals after removing the effect of $\zeta$ are independent. In other words, we assume
\begin{equation}\label{eq:factor_model}
    z_i = \Theta \zeta_i +\epsilon_i, \quad \Theta \in \mathbb{R}^{d \times k}, \, \, \epsilon_i \overset{iid}{\sim}\mathrm{N}(0, D), \, \, i = 1, \ldots, n, 
\end{equation}
where $D$ is a diagonal matrix and $\Theta$ is the factor-loading matrix. For simplicity, let us assume that $D$ is known. Thus, if we assume that $\text{Cov}(\zeta_i) = \mathrm{I}_k$ for all $i = 1, \ldots, n$, then $\Sigma$ admits the decomposition $\Sigma = \Theta \Theta' + D$. The inferential goal is to estimate the factor-loading matrix $\Theta$, which is in general not identifiable since for any orthogonal matrix $Q^{k \times k}$, $\Theta^\star = \Theta Q'$  satisfies $\Sigma = \Theta^\star \Theta^{\star'} + D$. A standard workaround is to assume that $\Theta$ is lower trapezoidal with positive diagonal entries. Hence, the parameter space is 
$$\{\Theta\in\mathbb R^{d\times k}:\Theta_{jj\prime}=0\text{ for }1\le j<j\prime\le k,\quad\Theta_{jj}>0\text{ for }1\le j\le k\}.$$

The observed likelihood, i.e. the likelihood for the binary observations, is a marginal over the latent $z$'s. For maximum likelihood estimation, we perform the following updates
\begin{equation}\label{eq:mmle_factor_model}
    \Theta_{t+1} = \Theta_t + \gamma_t \sum_{i=1}^n \E_{p(z_i, \zeta_i \mid y_i, \Theta_t)} [\nabla_\Theta \log f(y_i, z_i, \zeta_i \mid \Theta_t)],
\end{equation}
where $y$ is a deterministic function of $z$ and $f(y,z,\zeta\mid\Theta_t)=\mathbf1_{\mathcal C(y)}(z)f(z\mid\Theta_t,\zeta)q(\zeta)$. Furthermore, $p(z\mid y, \zeta, \Theta)$ is a constrained distribution where the constraints are determined by the observed $y$'s (orthants in this case), i.e. $z_i \mid y_i \in \mathcal{C}(y_i)$. Thus, to implement the iterations \eqref{eq:mmle_factor_model}, we need to sample from a constrained distribution where the constraints are linear. 

Commonly it is assumed that $q(\zeta)$ is $\mathrm{N}(0, \mathrm{I}_k)$ so that it can be easily marginalized out. Indeed, under this assumption, $z \mid \Theta_t \sim \mathrm{N}(0, \Theta_t \Theta_t' + D)$ and $p(z \mid y, \Theta_t) \propto \mathrm{N}(0, \Theta_t \Theta_t' + D) \mathbb{I}_{\mathcal{C}(y)}$. Hence in this case, although we have a constrained sampling problem, $-\log p(z\mid y, \Theta)$ or the potential of the target density is deterministically available. As a result, a standard implementation of MLA will be sufficient for this problem. The applicability of sFO-MLA starts to become apparent once we deviate from the two key assumptions of the model: a) Gaussianity of $\zeta$ and b) linear factor effects in \eqref{eq:factor_model}. If either assumption fails, then $p(z \mid y, \Theta)$ is not available deterministically. Before proceeding with the details of how sFO-MLA fits into the framework above, we briefly remark here that generalizing the model \eqref{eq:factor_model} by either assuming non-Gaussian $\zeta$ or non-linear factor effects is not artificial and in fact is well motivated. See for example \cite{wall2012mixture} for non-Gaussian $\zeta$ and \cite{cudeck2007factor} for non-linear factor models. In what follows, we adopt the case where $\zeta$ is Gaussian but the effect of the latent factors are non-linear. Specifically, the model we work with is
\begin{equation}
    z_i = g_\Theta(\zeta_i) + \epsilon_i, \quad \epsilon_i \sim \mathrm{N}(0, D), \zeta_i \sim \mathrm{N}(0, \mathrm{I}_k),
\end{equation}
with $\epsilon$ and $\zeta$ being independent of each other, and $g_\Theta(\zeta)$ is a non-linear function of $\zeta$ parameterized by $\Theta$. Then to implement \eqref{eq:mmle_factor_model}, we first note that if $q(\zeta)$ is independent of $\Theta$ and the complete data is $(y, z, \zeta)$, then 
$$\nabla_\Theta \log f(y, z,\zeta\mid\Theta) = J_\Theta g_\Theta(\zeta)' D^{-1} (z - g_\Theta(\zeta)),$$
where $J_\Theta$ is the Jacobian of $g_\Theta$ with respect to $\Theta$. Second, a sample from $(z, \zeta \mid y, \Theta)$ can be obtained by sampling from $z \mid y, \Theta$ and then sampling $\zeta\mid z, y, \Theta$. But 
$$p(z\mid y,\Theta)\propto\mathbf1_{\mathcal C(y)}(z)\int\mathrm N_d(z;g_\Theta(\zeta),D)q(\zeta)\,d\zeta.$$
which is no longer tractable. 
Now, applying Fisher's identity (this time treating $z$ to be the data and $\zeta$ to be the latent variable) we get
$$-\nabla_z \log p(z \mid y, \Theta) = \E[D^{-1} (z - g_\Theta(\zeta^\star))] = \E[G(z, \zeta^\star)], \quad \zeta^\star \sim p(\zeta \mid z, y, \Theta),$$
This gives an unbiased stochastic first-order oracle for the potential $-\log p(z\mid y,\Theta)$ in the interior of $\mathcal C(y)$, provided exact sampling from $p(\zeta\mid z,y,\Theta)$ is available. With approximate conditional samples, the resulting oracle can be biased. Moreover, $z \mid y, \Theta$ is supported on the set $\mathcal{C}(y)$. Thus, sFO-MLA is ideally suited for this task. Letting $Z_{t+1}$ be the sample generated by sFO-MLA at the $t$-th iteration from $p(z\mid y, \Theta_t)$, the updates of \eqref{eq:mmle_factor_model} are implemented as
\begin{equation}
    \Theta_{t+1} = \Theta_t + \gamma_t \sum_{i=1}^n  J_{\Theta_t} g_{\Theta_t}(\zeta_{i, t+1})' D^{-1}(Z_{i, t+1} - g_{\Theta_t}(\zeta_{i, t+1})).
\end{equation}

\subsection{Proof of Theorem \ref{thm:mmle_convergence}}
Define $g_t \coloneqq S_{\theta_t}^{\mathbf{Y}}(\mathbf{Z}_{t+1})$. Conditionally on $\mathcal{F}_t$, 
\begin{align*}
\E[g_t\mid\mathcal F_t]
&=\int S_{\theta_t}^{\mathbf Y}(z)\pi_{\theta_t}(dz\mid\mathbf Y)\\
&=\nabla\ell(\theta_t)+b_t,
\end{align*}
where $\pi_{\theta_t}(\cdot\mid\mathbf Y)$ is the conditional law of the
approximate sample produced by sFO-MLA, and
\[
b_t=\int S_{\theta_t}^{\mathbf Y}(z)\pi_{\theta_t}(dz\mid\mathbf Y)
-\int S_{\theta_t}^{\mathbf Y}(z)p_{\theta_t}(dz\mid\mathbf Y).
\]
Since sFO-MLA has been run for $T^*(\eta_t)$ iterations, we know 
$$\Wtwo\left(p_{\theta_t}(\cdot\mid \mathbf{Y}), \pi_{\theta_t}(\cdot \mid \mathbf{Y}) \right) \leq 2B_{\mathrm{pr}} \sqrt{\eta_t}$$
from Lemma \ref{lem:fixedstepopt}. Moreover, by the assumption of Lipschitz complete-data score and Kantorovich coupling we obtain
$$\norm{b_t} \leq L_S \Wtwo\left(p_{\theta_t}(\cdot\mid \mathbf{Y}), \pi_{\theta_t}(\cdot \mid \mathbf{Y}) \right) \leq 2B_{\mathrm{pr}} L_S\sqrt{\eta_t}.$$
Thus, $\norm{b_t}^2 \leq 4L_S^2 B_{\mathrm{pr}}^2 \eta_t$. Next consider the updates
\begin{align*}
    \theta_{t+1} = \theta_t + \gamma_t S_{\theta_t}^{\mathbf{Y}}(\mathbf{Z}_{t+1}) &= \theta_t + \gamma_t g_t\\
    & = \theta_t + \gamma_t\left[ \nabla \ell(\theta_t) + b_t + \xi_{t+1}\right].
\end{align*}
Let $e_t = \theta_t - \theta^\star$ and $V_t = \norm{e_t}^2$. Then the updates of \eqref{eq:sMMLE} can be alternatively written as $e_{t+1} = e_t + \gamma_t g_t$. Hence,
\begin{align*}
    V_{t+1} = V_t + 2\gamma_t \langle e_t, g_t \rangle + \gamma_t^2 \norm{g_t}^2.
\end{align*}
Using $g_t = \nabla \ell(\theta_t) + b_t + \xi_{t+1}$ and $\E[V_t \mid \mathcal{F}_t] = V_t$ we get
\begin{align*}
    \E[V_{t+1}\mid \mathcal{F}_t] = V_t + 2\gamma_t \langle e_t, \nabla \ell(\theta_t) \rangle + 2\gamma_t \langle e_t,  b_t \rangle + \gamma_t^2 \E[\norm{g_t}^2\mid \mathcal{F}_t]. 
\end{align*}
For $2\gamma_t \langle e_t, \nabla \ell(\theta_t) \rangle$, use the fact that there exists $m>0$ such that $\langle \theta - \theta^\star, \nabla \ell(\theta) \rangle \leq - m \norm{\theta - \theta^\star}^2$ for all $\theta$ to obtain $2\gamma_t \langle e_t, \nabla \ell(\theta_t) \rangle \leq - 2\gamma_t m \norm{e_t}^2 =  - 2\gamma_t m V_t$. By Young's inequality we get 
\begin{align*}
     2\gamma_t \langle e_t,  b_t \rangle \leq m \gamma_t V_t + \frac{\gamma_t}{m} \norm{b_t}^2 \leq m \gamma_t V_t + \frac{\gamma_t}{m} 4L_S^2 B_{\mathrm{pr}}^2 \eta_t.
\end{align*}
Finally, using $\gamma_t^2 \E[\norm{g_t}^2 \mid \mathcal{F}_t] \leq C\gamma_t^2(1 + V_t)$, we get
\begin{align*}
    \E[V_{t+1} \mid \mathcal{F}_t] \leq (1 + C\gamma_t^2) V_t -m \gamma_t V_t + C\gamma_t^2 + \frac{\gamma_t}{m} 4L_S^2 B_{\mathrm{pr}}^2 \eta_t.
\end{align*}
From \cite{robbins1971convergence} we thus get that $V_t$ converges almost surely to a finite random variable $V$ and $\sum_t \gamma_t V_t < \infty$. The proof is completed by observing that since $\sum \gamma_t = \infty$,  $V_t \geq 0$ and $V_t \to V$ almost surely, one must have $V = 0$ almost surely. 

\end{document}